\documentclass[letterpaper, 10 pt, conference]{ieeeconf}  % Comment this line out if you need a4paper

\IEEEoverridecommandlockouts                              % This command is only needed if 
\usepackage{graphics} % for pdf, bitmapped graphics files
\usepackage{epsfig} % for postscript graphics files
\usepackage{mathptmx} % assumes new font selection scheme installed
\usepackage{times} % assumes new font selection scheme installed
\usepackage{amsmath} % assumes amsmath package installed
\usepackage{amssymb}  % assumes amsmath package installed

\usepackage{arydshln}
\usepackage{balance}

\usepackage{float}
\usepackage{color}

\usepackage{algorithm}
\usepackage{algorithmic}

\newtheorem{remark}{\textbf{Remark}}
\newtheorem{theorem}{\textbf{Theorem}}

\title{\LARGE \bf
Constrained finite-time stabilization by model predictive control: an infinite control horizon framework
}

\author{Bing Zhu, \emph{Senior Member, IEEE}, Xiaozhuoer Yuan,
	Zewei Zheng, \emph{Member, IEEE},\\
 and Zongyu Zuo, \emph{Senior Member, IEEE}% <-this % stops a space
\thanks{This work has been submitted to the IEEE for possible publication. Copyright may be transferred without notice, after which this version may no longer be accessible..}% <-this % stops a space
\thanks{The authors are with The Seventh Research Division, Beihang University, Beijing 100191, P.R.China.}%
}

\begin{document}

\maketitle
\thispagestyle{plain}
\pagestyle{plain}

%%%%%%%%%%%%%%%%%%%%%%%%%%%%%%%%%%%%%%%%%%%%%%%%%%%%%%%%%%%%%%%%%%%%%%%%%%%%%%%%
\begin{abstract}
Existing results on finite-time model predictive control (MPC) often rely on terminal equality constraint, switching inside one-step region, or terminal cost with short control horizon, leading to limited initial feasibility.
This paper proposes an infinite-horizon Model Predictive Control (MPC) framework for the constrained finite-time stabilization of discrete-time systems, overcoming limitations found in existing finite-time MPC results.
The proposed framework is built upon a terminal cost strategy, but expands it by replacing the short-horizon terminal cost with the sum of stage costs over an infinite control horizon. 
This design choice significantly enlarges the initial feasibility region and avoids the need for terminal equality constraints or switching strategies during implementation.
It is proved that the proposed finite-time MPC guarantees finite-time stabilization performance once the state trajectory enters the predefined terminal set. The infinite-horizon finite-time MPC is shown to be equivalently implementable as a finite-horizon MPC with a terminal cost, thereby ensuring computational tractability. The proposed finite-time MPC is systematically extended and shown to be applicable to both constrained multi-input linear systems and a class of constrained nonlinear systems that are feedback linearizable.

\end{abstract}

%\begin{keywords}                           % Five to ten keywords,  
%Model predictive control; finite-time control; infinite-horizon optimization
%\end{keywords}

%%%%%%%%%%%%%%%%%%%%%%%%%%%%%%%%%%%%%%%%%%%%%%%%%%%%%%%%%%%%%%%%%%%%%%%%%%%%%%%%
\section{Introduction}

For discrete-time systems, finite-time stabilization %(or deadbeat control) 
means the closed-loop state converges exactly to zero in a finite number of time instants. 
Finite-time stabilization is possible only if the system is controllable \cite{Kalman1960On}.
A typical finite-time stabilization method for linear systems is to place all eigenvalues at the origin.
Pioneering results on finite-time stabilization are provided in \cite{o1981discrete} and \cite{kucera1984deadbeat} and references therein.
Finite-time stabilization or control is useful for systems needing high accuracy and fast responses.
Potential applications include adaptive estimation for uncertainties in MPC \cite{zhu2020constrained}, deadbeat observers \cite{Ansari2019Deadbeat}, learning-based MPC \cite{huang2023LSTM}.
Recent results in finite-time stabilization and control can be found in \cite{haddad2020finite} and \cite{tuna2012state} for nonlinear systems.
%It is noted that 
%State or control constraints are rarely considered in the above references.

{
It is noted that, for discrete-time systems, finite-time control aims to drive the state to the origin within a prescribed number of steps, adaptive and robust MPC schemes (e.g., \cite{zhu2020constrained}) focus on maintaining constraint satisfaction and stability in the presence of parametric uncertainties. These approaches typically guarantee asymptotic convergence through online parameter estimation or robust constraint tightening, rather than enforcing strict finite-time convergence. Thus, adaptive/robust MPC and finite-time control address fundamentally different objectives: robustness against uncertainty versus guaranteed convergence within finite time.
}

In principle, finite-time stabilization is often achieved through high-gain control to ensure a fast convergence rate near the origin, % \cite{Li2025Semi}, 
leading to aggressive closed-loop system behavior around the initial condition. 
%However, practical systems are always subject to constraints, and aggressive closed-loop behavior possibly exceeds constraints, degrading or even destabilizing the actual system performance.
{However, all practical systems have constraints. Aggressive closed-loop behavior can easily violate these limits, which degrades performance or causes instability.}
A typical strategy for unifying constrained control and finite-time stabilization is the formulation of model predictive control (MPC) schemes endowed with finite-time convergence properties.
Some theoretical results can be found, e.g., in \cite{sutherland2019closed, 2018Sparsity, cheng2001stability}. 
%Applications of MPC with finite-time convergence can be found in \cite{kang2019symmetrical, wang2019improved, kreiss2021optimal}.
{
Applications of MPC with finite-time convergence are reported in interior permanent magnet synchronous motors \cite{kang2019symmetrical}, parallel interconnection of buck converters \cite{kreiss2021optimal}, finite-time trajectory tracking control of redundant manipulators \cite{jin2022finite}, finite-time control of mobile manipulators with floating-base \cite{su2025finite}, finite-time stabilization of DC electrical power systems in More electric aircraft \cite{zhang2024finite}, finite-time MPC for Permanent Magnet Synchronous Motor \cite{xie2015finite}, and MPC with AC deadbeat control for Modular Multilevel Converter-High-Voltage DC \cite{majstorovic2025deadbeat}.
In all these applications, finite-time MPC plays crucial role to achieve the desired performance and constraint satisfaction.
}

{
In MPC design, enforcing finite-time convergence typically requires prioritizing rapid contraction of the state, which can compromise transient performance and exacerbate constraint violations.
As a result, several constrained finite-time MPC formulations have been developed to balance feasibility and convergence guarantees.
%
%Sub-optimal MPC in \cite{scokaert1999suboptimal} ensures the closed-loop trajectory enters a terminal invariant set before a finite time. If the terminal set shrinks to be the origin (which actually becomes a terminal equality constraint), then finite-time convergence performance can be expected. 
One approach is the suboptimal MPC framework in \cite{scokaert1999suboptimal}, which ensures entry into a terminal invariant set within finite time; when this set collapses to the origin, the method effectively becomes a terminal equality–based scheme with limited feasibility.
%
%The finite-time convergence performance can be achieved via 
%In ``one-step region'' strategy \cite{anderson2018finite},
%if the state enters one-step region in finite-time, it can be steered to the origin in one step without penalizing the control input. 
%However, in this type of MPC formulation, it requires penalizing the distance between the state and the one-step region, leading to extra switch strategies in implementation.
%This MPC formulation requires penalizing the distance between the state and the one-step region. This necessity, in turn, leads to extra switching strategies during implementation.
To avoid such restrictive terminal conditions, the one-step–region strategy in \cite{anderson2018finite} steers the state into a region from which it can reach the origin in a single step, but this requires penalizing the distance to the region and introduces additional switching logic in implementation.
%
%Controller matching \cite{di2009model} can be applied to design finite-time MPC. However, it requires to solve optimization with a matrix equality constraint, which is possibly infeasible.
Controller-matching \cite{di2009model} provides an alternative way for achieving finite-time behavior by embedding a desired closed-loop controller into the MPC formulation, yet this requires solving an optimization problem with a matrix equality constraint, which may lead to infeasibility in practice.
Practical finite-time MPC was provided in \cite{dreke2023practical}, where the matrix equality constraint is relaxed to a matrix inequality constraint, and approximate finite-time convergence performance is obtained.
In \cite{zhu2024finite}, a finite-time MPC was proposed, where the control horizon equals the system dimension, and only the terminal cost is penalized. While a terminal inequality constraint ensures convergence, this approach suffers from a limited initial feasible region, especially for low-dimensional systems. Although integral control can extend the control horizon, the problem of initial feasibility persists.
Robust MPC variants seek to extinguish transients within a short horizon under disturbances \cite{schildbach2025deadbeat}. Extensions to stochastic settings \cite{he2022finite} further show the diversity of strategies for achieving fast convergence.
These limitations motivate the need for a more general finite-time MPC framework that preserves feasibility while avoiding terminal equality constraints and switching strategies.
}

In this paper, we propose a general finite-time MPC framework with infinite control horizon.
A key motivation for adopting this formulation, targeting finite-time convergence, is to overcome the limited initial feasibility inherent in short-horizon or terminal-equality–based finite-time MPC schemes. 
The distinguished feature of the proposed finite-time MPC is that stage costs are summed from the step equivalent to the system dimension to infinity. 
Importantly, this formulation remains computationally practical because it admits an equivalent finite-horizon implementation.
Main \emph{contributions} of this paper include:
1) an infinite-horizon finite-time MPC framework, avoiding terminal equality constraints and switching strategies;
2)
equivalent finite-horizon implementation that ensures both asymptotic stability and strict finite-time convergence, with significantly enlarged feasibility region;
and 3)
systematic extensions to multi-input and constrained nonlinear systems.
%validated by numerical examples.
%Three numerical examples are provided to illustrate the theoretical results.
Numerical examples are presented to demonstrate the theoretical findings.

%The rest of this paper is arranged as follows:
%The problem of constrained finite-time control is formulated in Section \ref{sec problem}.
%The proposed finite-time MPC with infinite control horizon is designed in Section \ref{sec main}, where extensions to multi-input systems and nonlinear systems are described.
%Some simulation examples are provided in Section \ref{sec sim}.
%Concluding remarks are given in the final section.

%The paper is organized as follows. Section \ref{sec problem} formulates the constrained finite-time control problem. Section \ref{sec main} develops the finite-time MPC with an infinite horizon and its extensions to multi-input and nonlinear systems. Simulation results are reported in Section \ref{sec sim}, and concluding remarks are given in the last section.

%\vspace{5mm}
%-------------------------------------------------------------------
\section{Problem Formulation}\label{sec problem}

{
\subsection{Notation}

In this paper, $\mathbb{Z}^+$ denotes the set of all positive integers;
%$\mathbb{C}$ denotes the set of all complex numbers;
and $\mathbb{R}^n$ denotes the set of all $n$-dimensional real-value vectors, where $n\in \mathbb{Z}^+$.

The time instant is denoted by $k\in \mathbb{Z}^+\cup\{0\}$.
Throughout this paper, $k=0$ is the initial time.

System state and the control input are denoted by $x(k) \in \mathbb{R}^n$ and $u(k) \in \mathbb{R}$, respectively. The predicted trajectories are collected as
\begin{align}\label{predictive x}
	X(k) %=&  [x^T(1|k), x^T(1|k), \cdots, x^T(N|k)]^T,\\
    =&[x^T(1|k), x^T(2|k), \cdots, x^T(N|k)]^T,
    \end{align}
and
\begin{align}\label{predictive u}
	U(k) %=&  [u(0|k), u(1|k), \cdots, u(N-1|k)]^T,\\
    =& [u(0|k), u(1|k), \cdots, u(N-1|k)]^T.
\end{align}
where $x(i|k)$ and $u(i|k)$ denote the $i$th-step predicted state and control from time $k$; and $N$ denotes the control horizon.
The optimal state and control sequences are denoted by $x^\ast(i|k)$ and $u^\ast(i|k)$, respectively; and the feasible (sub-optimal) state and control sequences are represented by $\tilde x(i|k)$ and $\tilde u(i|k)$, respectively.

State constraint and control constraint are denoted by $\mathcal{X}$ and $\mathcal{U}$, respectively. The terminal constraint is represented by $\mathcal{X}_f$.

The notation $\otimes$ (or $\bigotimes$) respresents Cartesian product.
%and $\oplus$ denotes the Minkowski addition.

}%end of blue

\subsection{Existing result on finite-time MPC}

Consider the discrete-time linear plant
\begin{align}\label{linear syst}
x(k+1) = A x(k) + b u(k),
\end{align}
subject to
\begin{align}\label{constraints}
u \in \mathcal{U},~~x \in \mathcal{X},
\end{align}
where $x \in \mathbb{R}^n$ and $u \in \mathbb{R}$ are the state and the control input. Here, $n$ represents the dimension of the overall system, and $k$ represents the discrete time instant. The pair $(A,b)$ is assumed to be fully controllable. The sets $\mathcal{U}$ and $\mathcal{X}$ are convex constraints that include the origin.

In \cite{zhu2024finite}, it was proposed and proved that finite-time stabilization can be guaranteed by the following MPC with 1) penalizing only the terminal cost, and 2) setting the control horizon equals to the system dimension:
\begin{align}\label{deadbeat opt pre}
	[U^\ast(k), ~X^\ast(k)] = \mathrm{arg}\min_{U(k), ~X(k)}\|x(n|k)\|^2_P,
\end{align}
subject to 
\begin{align}
	&x(0|k)=x(k), \label{initial constraint}\\
    &x(i+1|k) = Ax(i|k) + bu(i|k),~i=0,\cdots,n-1,  \label{dyn constraint}\\
	&u(i|k)\in\mathcal{U},~x(i+1|k)\in\mathcal{X}, ~i=0,\cdots,n-1,  \label{state and control constraint n}\\
	&x(n|k)\in \mathcal{X}_f, \label{terminal constraint n}
\end{align}
%where $x(i|k)$ and $u(i|k)$ are the $i$-th predictive state and control at time $k$, and
%where $x(i|k)$ and $u(i|k)$ denote the $i$th-step predicted state and control from time $k$. The predicted trajectories are collected as
%\begin{align*}
%	X(k) %=&  [x^T(1|k), x^T(1|k), \cdots, x^T(N|k)]^T,\\
%    =&[x^T(1|k), x^T(2|k), \cdots, x^T(n|k)]^T,
%    \end{align*}
%and
%\begin{align*}
%	U(k) %=&  [u(0|k), u(1|k), \cdots, u(N-1|k)]^T,\\
%    =& [u(0|k), u(1|k), \cdots, u(n-1|k)]^T.
%\end{align*}
%are the predictive state sequence and predictive control sequence, respectively;
where the weighting matrix $P$ in \eqref{deadbeat opt pre} is solved from Lyapunov equation 
\begin{align}\label{Lyapunov eqn}
	(A-bK)^TP(A-bK)-P = -Q-K^TRK,
\end{align}
where $Q$ and $R$ are given positive matrices;
the feedback gain $K$ is chosen such that $|\mathrm{eig}(A-bK)|<1$.
%the control horizon is set equal to the system dimension, i.e., $N=n$.
The finite-time MPC is executed on the linear plant \eqref{linear syst} via receding horizon scheme $$u(k)= [1, 0, \cdots,0]_{1\times n}U^\ast(k),$$
where $U^\ast(k)$ is solved from constrained optimization \eqref{deadbeat opt pre}.

Moreover, if constraints \eqref{constraints} are not considered, the solution to the unconstrained finite-time MPC is
\begin{align*}
    u(k) = -K_{db}x(k),
\end{align*}
satisfying that all eigenvalues of $(A-bK_{db})$ are zero.

The above finite-time MPC does not rely on switching inside one-step region \cite{anderson2018finite} or terminal equality constraint \cite{scokaert1999suboptimal}. However, its control horizon is short, sometimes leading to limited initial feasibility region.

{
\begin{remark}
    The terminal weighting matrix $P$ in \eqref{Lyapunov eqn} is used to guarantee asymptotic stability. In \cite{zhu2024finite}, the finite-time stability is based on asymptotic stability, with setting the control horizon equal to the system dimension.
\end{remark}

\begin{remark}
    In the unconstrained case, the (unique) solution to \eqref{deadbeat opt pre} is $x(n|k)=0$, if $N=n$ is assigned. Comparatively, if stage costs (other than the terminal cost in \eqref{deadbeat opt pre}) are also penalized, then $x(n|k)=0$ is not unique, and finite-time convergence cannot be ensured.
\end{remark}
}

%----------------------------------------
\subsection{Problem statement}

%The \emph{objective} is to design stabilizing MPC with finite-time convergence performance for the constrained linear plant \eqref{linear syst} subject to state and control constraints \eqref{constraints}.
%Compared with the previous terminal cost strategy \cite{zhu2024finite}, the initial feasibility of the proposed finite-time MPC in this paper should be significantly improved.

The \emph{objective} is to design a stabilizing MPC framework that achieves finite-time convergence property for the constrained linear system \eqref{linear syst}–\eqref{constraints}. Compare with the terminal-cost formulation in \cite{zhu2024finite}, the proposed scheme offers a significant improvement in initial feasibility.

%With modifications, the proposed finite-time MPC should also be applicable to 
With appropriate modifications, the finite-time MPC with improved initial feasibility can be extended to:
\begin{itemize}
	\item constrained multi-input linear systems 
\begin{align}\label{multi input syst}
	x(k+1) = %Ax(k)+B u(k),
     Ax(k) + \sum_{j=1}^mb_ju_j(k) = Ax(k)+Bu(k),
\end{align}
subject to
\begin{align}\label{constraints multi}
x\in\mathcal{X}\subset\mathbb{R}^n,~u\in\mathcal{U}\subset\mathbb{R}^m,~~m>1,
\end{align}
where $u=[u_1,\cdots,u_m]^T$; $B=[b_1,\cdots,b_m]$; and $(A, B)$ is controllable.

\item constrained nonlinear plant
\begin{align}
	&x(k+1)= f(x(k), u(k)), \label{nonlinear syst}\\
&u\in\mathcal{U}\subset\mathbb{R},~x\in\mathcal{X}\subset\mathbb{R}^n \label{constraints nonl}
\end{align}
%where the nonlinear function $f: \mathbb{R}^n\times\mathbb{R}^1\to \mathbb{R}^n$  
%is continuously differentiable with respect to $x$ and $u$, and it satisfies $f(0,0)=0$. 
%It is assumed that the nonlinear plant is feedback linearizable \cite{aranda1996linearization} in a compact set $\{x\in \mathcal{D}\subseteq \mathcal{X}\}$ containing the origin.
%
where $f:\mathbb{R}^n\times\mathbb{R} \to \mathbb{R}^n$ is continuously differentiable in both arguments and satisfies $f(0,0)=0$. The nonlinear plant \eqref{nonlinear syst} is assumed to be feedback linearizable in a compact set $\mathcal{D}\subseteq\mathcal{X}$ \cite{aranda1996linearization}.
\end{itemize}

\vspace{5mm}
%===============================================================
\section{Main results}\label{sec main}

The proposed finite-time MPC is designed within infinite control horizon framework to improve initial feasibility.
It can be implemented via finite-horizon MPC with sufficiently large control horizon.
%With modifications, the proposed finite-time MPC is applicable to multi-input systems and nonlinear systems.
%With appropriate adaptations, the proposed finite-time MPC can be applied to systems with multiple inputs as well as nonlinear dynamics.

%-----------------------------------------------------
\subsection{Finite-time MPC with infinite control horizon}

%In the proposed finite-time MPC with infinite control horizon, 
The \emph{key setting} in the finite-time MPC with infinite control horizon is that, summation of stage costs starts from the number of system dimension to infinity, i.e., the lower summation index is ``$i=n$'':
\begin{align}\label{cost fun}
    J(k) = \sum_{i=n}^{\infty}\left(\|x(i|k)\|^2_Q+\|u(i|k)\|_R^2\right).
\end{align}
The optimization in finite-time MPC is established by
\begin{align}\label{deadbeat opt inf}
	[U^\ast(k), X^\ast(k)] = \mathrm{arg}\min_{U(k), X(k)} J(k),
\end{align}
subject to \eqref{initial constraint}--\eqref{state and control constraint n} for infinite $i=0,\dots, \infty$.
%\begin{align}
%&x(0|k)=x(k),\\
%	&x(i+1|k) = Ax(i|k) + bu(i|k),~i=0,1,\cdots,\\
%	&u(i|k)\in\mathcal{U},~x(i|k)\in\mathcal{X}, ~i=0,1,\cdots, 
%\end{align}
The finite-time MPC is implemented by 
\begin{equation}\label{receding horizon imp}
    u(k) = [1,0,\cdots,\cdots]_{1\times \infty}U^\ast(k), %=u^\ast(0|k),
\end{equation}
where the optimal predictive control sequence $U^\ast(k)$ is solved from \eqref{deadbeat opt inf}.

\begin{remark}
	%Compared with \cite{zhu2024finite}, 
 In \eqref{deadbeat opt inf}, the control horizon is infinity, such that the initial feasibility region is comparable with classic MPC techniques.
\end{remark}

%The finite-time convergence performance is given in the following theorem.

\begin{theorem}\label{Thm 1}
Consider the linear discrete-time system \eqref{linear syst} subject to constraints \eqref{constraints}. Design the cost function by \eqref{cost fun}, where the sum of stage costs is from $n$ to $\infty$.
	If the constrained optimization \eqref{deadbeat opt inf} is feasible initially,
    and the control is implemented by \eqref{deadbeat opt inf} and \eqref{receding horizon imp},
	then,
	\begin{enumerate}
		\item[1)] the constrained optimization \eqref{deadbeat opt inf} is feasible recursively;
		%\item[2)] the closed-loop system 
        %with the proposed finite-time MPC \eqref{deadbeat opt inf}--\eqref{receding horizon imp} 
        %is asymptotically stable;
		\item[2)] there exist a finite $T>0$, such that the closed-loop state $x(k)=0$ whenever $k>T$.
	\end{enumerate}
\end{theorem}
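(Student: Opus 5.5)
Recursive feasibility follows from the usual shifted-sequence argument; finite-time convergence comes from a value-function descent plus a combinatorial observation about which optimal solutions have zero cost.

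\textbf{Recursive feasibility.} Let $U^\ast(k)=\{u^\ast(i|k)\}_{i\ge0}$ with states $\{x^\ast(i|k)\}_{i\ge0}$ be optimal at time $k$. Applying $u(k)=u^\ast(0|k)$ gives $x(k+1)=x^\ast(1|k)$. Take the shifted candidate
\[
\tilde u(i|k+1)=u^\ast(i+1|k), \qquad \tilde x(i|k+1)=x^\ast(i+1|k), \qquad i\ge 0 .
\]
It satisfies the initial condition, the dynamics and all pointwise constraints in $\mathcal{U}$ and $\mathcal{X}$, since these are inherited term by term. Its cost is finite, so problem \eqref{deadbeat opt inf} is feasible at $k+1$. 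I will also assume the infimum is attained; this holds because the problem is a convex quadratic program with closed convex constraints, and a finite optimal cost forces $x^\ast(i|k)\to0$.

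\textbf{Cost decrease.} The shifted candidate gives
\[
J^\ast(k+1)\le J(\tilde U)=J^\ast(k)-\|x^\ast(n|k)\|_Q^2-\|u^\ast(n|k)\|_R^2 ,
\]
so $J^\ast(k)$ is nonincreasing. Summing this inequality shows $\sum_k \|x^\ast(n|k)\|_Q^2<\infty$. Hence $x^\ast(n|k)\to0$, and therefore $J^\ast(k)\to0$.

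\textbf{Finite-time convergence.} Here the lower index $i=n$ does the work. I claim $J^\ast(k)=0$ iff $x(k)$ can be steered to $0$ in $n$ steps by an admissible control, and that $x^\ast(n|k)=0$ is then forced. Indeed, $J(k)=0$ means $x(i|k)=0$ and $u(i|k)=0$ for all $i\ge n$ (with $Q,R>0$). Conversely, by controllability $x(k)$ can be driven to zero in $n$ steps, and once $J^\ast(k)=0$ we have $x^\ast(n|k)=0$.

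Next I use that the $n$-step null-controllable set under constraints contains a neighbourhood of the origin; this needs $0$ in the interior of $\mathcal{U}$ and $\mathcal{X}$. This is the step requiring an interiority assumption not explicitly stated, and I expect it to be the main obstacle to justify rigorously. By controllability, $x=-\mathcal{C}$-combination, i.e. $U=M x$ for some fixed matrix $M$, so small $\|x\|$ gives small admissible inputs. Call this neighbourhood the terminal set $\mathcal{X}_f$.

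It then remains to show the closed-loop state enters $\mathcal{X}_f$ in finite time. Since $J^\ast(k)\to0$, I bound $\|x(k)\|$ by the cost. The $n$-step map from $(x(k),u(0|k),\dots,u(n-1|k))$ to $x(n|k)$ alone does not control $x(k)$, so I will instead use the usual asymptotic stability argument. A standard MPC argument with the stage-cost lower bound, applied on a shifted index, yields $x(k)\to0$: $x(k+n)=x^\ast(n|k)$ holds only if the optimal plan is followed, which is not guaranteed. Alternatively, I will argue directly that $x^\ast(n|k)\to0$ together with the recursion and bounded first-$n$ controls (bounded by cost decrease and compactness) gives convergence into $\mathcal{X}_f$.

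Once $x(T_0)\in\mathcal{X}_f$ we have $J^\ast(T_0)=0$, so $J^\ast(k)=0$ for all $k\ge T_0$ by monotonicity. The remaining subtlety is non-uniqueness of the first $n$ controls. Zero cost only fixes $x^\ast(n|k)=0$, and re-optimizing at each step might postpone reaching the origin indefinitely. I will handle this with the shifted-candidate argument above: since $J^\ast\equiv0$, the invariance $x(k+1)\in\mathcal{X}_f$ holds. To obtain actual arrival I expect to need a tie-breaking rule, or to use the unconstrained analysis inside $\mathcal{X}_f$: the optimizer then coincides with the deadbeat law $u=-K_{db}x$ of \cite{zhu2024finite}, giving $x(k)=0$ for $k>T:=T_0+n$.
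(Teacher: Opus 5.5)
You have a genuine gap, at exactly the step you flag: you never show that the closed-loop state $x(k)$ itself converges to the origin, so you never show that it enters your neighbourhood $\mathcal{X}_f$. Your recursive feasibility and cost-decrease steps match the paper, and your decrement $-\|x^\ast(n|k)\|_Q^2-\|u^\ast(n|k)\|_R^2$ is the correct one. Neither fallback you offer closes the gap:
\begin{itemize}
\item The ``standard MPC argument with the stage-cost lower bound'' fails because the cost starts at $i=n$. Hence $J^\ast$ is not positive definite in $x(k)$: $J^\ast(k)=0$ for every state in the constrained $n$-step null-controllable set. So $J^\ast(k)\to0$ alone says nothing about $x(k)$.
\item Boundedness of the first $n$ controls does not give convergence either, and $\mathcal{U}$ is not assumed compact.
\end{itemize}

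The missing idea is this. For a single-input controllable pair, $S=[A^{n-1}b,\cdots,b]$ is square and invertible, so the first $n$ optimal controls are fixed by the endpoint: $U^\ast_{[0:~n-1]}(k)=S^{-1}\left(x^\ast(n|k)-A^nx(k)\right)$. The applied control is therefore $u(k)=-K_{db}x(k)+[1,0,\cdots,0]S^{-1}x^\ast(n|k)$. The closed loop becomes $x(k+1)=(A-bK_{db})x(k)+b[1,0,\cdots,0]S^{-1}x^\ast(n|k)$, with $(A-bK_{db})^n=0$. Iterating $n$ steps eliminates the $x(k)$ term, so $\|x(k+n)\|\le c\max_{0\le j\le n-1}\|x^\ast(n|k+j)\|\to0$. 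This is the paper's key step, and it is what brings $x$ into $\mathcal{X}_f$ in finite time.

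The same identity removes your second worry; no tie-breaking rule is needed. The first $n$ controls are not non-unique: once $J^\ast(T_0)=0$, monotonicity gives $x^\ast(n|k)=0$ for all $k\ge T_0$. The formula above then yields exactly $u(k)=-K_{db}x(k)$, hence $x(k)=0$ for $k\ge T_0+n$.

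Your route through $J^\ast(T_0)=0$ for this last step is actually slightly cleaner than the paper's ``constraints become inactive'' argument. You are also right that both routes need $0$ in the interior of $\mathcal{U}$ and $\mathcal{X}$, which the paper uses only implicitly. That interiority would also be needed for your unproved claim that $x^\ast(n|k)\to0$ implies $J^\ast(k)\to0$. That claim becomes unnecessary once you use the closed-loop identity.
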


\textbf{\textit{Proof:}}
	%The statement 1) is obvious by using ``tail" method.
        1) Suppose, at time $k$, the constrained optimization \eqref{deadbeat opt inf} is feasible, and the optimal predicted sequences are
        \begin{align*}
            x^\ast(i|k),~u^\ast(i|k),~i=0,1,\cdots,\infty,
        \end{align*}
        satisfying system dynamics and all state and control constraints.
        
        Then, at time $k+1$, a feasible candidate solution can be built by shifting the original sequences by one step:
        \begin{align*}
            \tilde{x}(i|k+1) = x^\ast (i+1|k),~
        \tilde{u}(i|k+1)= u^\ast (i+1|k),
        \end{align*}
        for $i=0,1,\cdots,\infty$, satisfying system dynamics and all constraints.
	
	2) %For $i=1,2,\cdots$,  let $x^\ast(i|k)$ and $u^\ast(i-1|k)$ denote the optimal predictive state sequence and control sequence at time $k$.
	%\begin{align*}
	%	J(k) \triangleq \sum_{i=n}^{\infty}\|x(i|k)\|^2_Q+\|u(i|k)\|_R^2.
	%\end{align*}
	%Then feasible state sequence and control sequence at $k+1$ exist:
	%\begin{align*}
	%	\tilde{x}(i|k+1) = x^\ast (i+1|k),\\
    %    \tilde{u}(i-1|k+1)= u^\ast (i|k).
	%\end{align*}
	Denote the optimal cost at time $k$ by
	\begin{align*}
		J^\ast(k) =& \sum_{i=n}^{\infty}\|x^\ast(i|k)\|^2_Q+\|u^\ast(i|k)\|_R^2,
        \end{align*}
        and
        \begin{align*}
		\tilde{J}(k+1) =& \sum_{i=n}^{\infty}\|\tilde{x}(i|k+1)\|^2_Q+\|\tilde{u}(i|k+1)\|_R^2\\
        =&\sum_{i=n}^{\infty}\|{x}^\ast(i+1|k)\|^2_Q+\|{u}^\ast(i+1|k)\|_R^2.
	\end{align*}
	It holds that
 \begin{align*}
		J^\ast(k+1)-J^\ast(k) \leq&  \tilde{J}(k+1)-J^\ast(k) \\
		=& -{x^\ast}^T(n|k)Qx^\ast(n|k)\leq 0,
\end{align*}
	indicating that $x^\ast(n|k)$ converges to the origin as $k\to \infty$. Consider
	\begin{align}
		x^\ast(n|k) %=& Ax^\ast(n-1|k)+bu^\ast(n-1|k)\\
		=&A^n x(k) + \sum_{i=0}^{n-1} A^i bu^\ast(n-1-i|k)\\
		=&A^n x(k) + %[A^{n-1}b,~A^{n-2}b,~\cdots,~b] 
        SU^\ast_{[0:~n-1]}(k), \label{last row of xn}
	\end{align}
	where $S\triangleq [A^{n-1}b,\cdots,b]$ is non-singular, % (since $(A,b)$ is controllable), 
    and
	\begin{align*}
		U^\ast_{[0:~n-1]}(k) \triangleq  [u^\ast(0|k),\cdots,u^\ast(n-1|k)]^T.
	\end{align*}
	From \eqref{last row of xn}, it follows that
	\begin{align*}
		U^\ast_{[0:~n-1]}(k)= S^{-1} \left( x^\ast(n|k)-A^nx(k)\right),
	\end{align*}
	thus 
	\begin{align}
		u(k)=& u^\ast(0|k) =[1,0,\cdots,0]_{1\times n}U^\ast_{[0:~n-1]}(k)\\
		=&[1,0,\cdots,0]_{1\times n}S^{-1} \left( x^\ast(n|k)-A^nx(k)\right). \label{converging u}
	\end{align}  
	Substituting \eqref{converging u} into \eqref{linear syst} yields
	\begin{align}\label{closed-loop}
		x(k+1) = (A-bK_{db})x(k)+ BS_n^T x^\ast(n|k),
	\end{align}
	where $K_{db}=[1,0,\cdots,0]S^{-1}A^n$ is the deadbeat gain \cite{zhu2024finite}, i.e., all eigenvalues of $(A-bK_{db})$ are zero; and $S_n^T=[1,0,\cdots,0]S^{-1}$.
	Therefore, \eqref{closed-loop} implies that $x(k)=0$ is asymptotically stable.

 %To prove 3), 
 %consider that the closed-loop state converges to zero asymptotically, such that it converges, in some finite time $T_1>0$, to a sufficiently small set (containing the origin) where constraints \eqref{constraints} are naturally satisfied, and the constrained optimization actually becomes an unconstrained optimization. Then, the optimal solution is
 Note that $x(k)$ approaches the origin asymptotically. Hence, after some finite time $T_1>0$, the state enters a small neighborhood of the origin where constraints \eqref{constraints} are inherently fulfilled, reducing the constrained problem to its unconstrained counterpart, i.e., %In this case, the optimal solution is
	\begin{align*}
		u^\ast(i|k)= -K_{db}x^\ast(i|k),%~i=0,1,\cdots,
	\end{align*}
such that
    \begin{align*}
        x^\ast(i|k)=0,~\forall~i\geq n, %n+1, n+2\cdots,
    \end{align*}
and $J(k)=0$ in \eqref{cost fun} reaches its minimum.
Consequently,
    \begin{align*}
        u(k) = -K_{db}x^\ast(0|k) = -K_{db}x(k),
    \end{align*}
	such that %$x^\ast(i|k)=0$ for all $i\geq n$.
	%the closed-loop system satisfies
    \begin{align*}
        x(k+1) = (A-bK_{db})x(k),
    \end{align*}
    and its state converges to the origin in finite time $T=T_1+n< +\infty$.
	\hfill $\square$

\begin{remark}
    %In the proof of Theorem \ref{Thm 1}, only the fact of finite-time convergence is proved. However, 
    Explicit calculation of the finite convergence time depends on detailed constraints, and it is not addressed in Theorem \ref{Thm 1}.
\end{remark}

\begin{remark}
    It should be noted that, similar to classical MPC formulations, the results of Theorem 1 are established under the assumption of initial feasibility, i.e., the optimization problem admits a feasible solution at the initial step. This assumption is standard in MPC literature, since recursive feasibility and stability can only be guaranteed once the optimization is feasible initially. Importantly, the proposed infinite-horizon finite-time MPC enlarges the initial feasible set significantly compared with previous terminal-cost strategies, as demonstrated in Fig. 2 in simulation section. Hence, while initial feasibility is assumed for theoretical analysis, the proposed framework improves the practical likelihood of satisfying this assumption.
\end{remark}

%--------------------------------------------------------------------
\subsection{Implementation via finite-horizon MPC}

%In this section, an implementation via finite-horizon MPC is proposed to guarantee the finite-time performance of the proposed closed-loop system.

Set a finite control horizon $N\geq n$. The equivalent finite-horizon cost function can be designed by
\begin{align}\label{cost finite-horizon}
    J_f(k) = \sum_{i=n}^{N-1}\left(\|x(i|k)\|^2_Q+\|u(i|k)\|_R^2\right)
	+\|x(N|k)\|^2_P,
\end{align}
where the summation is from $n$ to $N-1$;
the weighting matrix $P$ is solved from Lyapunov equation \eqref{Lyapunov eqn}.
%with $K$ satisfying that all eigenvalues of $(A-bK)$ are inside the unit circle on the complex plane.
The finite-time MPC is constructed in the implementable finite-horizon framework:
\begin{align}
	\begin{split}\label{deadbeat opt finite}
		[U^\ast(k), ~X^\ast(k)] =& \mathrm{arg}\min_{U(k),~X(k)} J_f(k),
	\end{split}
\end{align}
subject to constraints
\eqref{initial constraint}--\eqref{state and control constraint n} for finite $i=0,\dots, N-1$, and %the terminal constraint
\begin{align}
%&x(0|k)=x(k),\\
%	&x(i+1|k) = Ax(i|k) + bu(i|k),~i=0,\cdots N-1,\\
%	&u(i-1|k)\in\mathcal{U},~x(i|k)\in\mathcal{X}, ~i=0,\cdots,N,\\
	x(N|k) \in \mathcal{X}_f, \label{terminal constraint N}
\end{align}
where %$N$ is the finite control horizon satisfying $N>n$;
$\mathcal{X}_f$ is an invariant terminal set satisfying
\begin{align}
	\mathcal{X}_f\subset \mathcal{X}, %\label{inv constraint}\\
	~~~-K\mathcal{X}_f\subset \mathcal{U},  % \label{inv control constraint}\\
	~~~(A-bK)\mathcal{X}_f \subset \mathcal{X}_f. \label{inv state constraint}
\end{align}
Then, the finite-time MPC is implemented via
\begin{align}\label{receding horizon imp 2}
    u(k) = [1,0,\cdots,0]_{1\times N}U^\ast(k), %=u^\ast (0|k),
\end{align} 
where $U^\ast(k)$ is solved from the optimization \eqref{deadbeat opt finite}.

\begin{remark}
Feedback gain $K$ does not have to be deadbeat gain $K_{db}$ or LQR gain $K_{LQR}$.
\end{remark}
\begin{remark}
The terminal constraint can be obtained by
\begin{align}\label{calculation of terminal set}
  \mathcal{X}_f=\{x\in \mathbb{R}^2 \left| x^TPx\leq \epsilon \right.\},
\end{align}
where $\epsilon>0$ is solved from quadratic programming:
\begin{align}\label{quadratic prog of terminal set}
   \epsilon = \max_{x\in \mathcal{X},~Kx\in \mathcal{U}} x^TPx.
\end{align}
\end{remark}

\begin{theorem}\label{thm finite}
	Consider the linear discrete-time plant \eqref{linear syst} subject to constraints \eqref{constraints}.
    Design the cost function by \eqref{cost finite-horizon}, where the summation is from $n$ to $N-1$,
    and the terminal weight $P$ is solved from Lyapunov equation \eqref{Lyapunov eqn}.  
    If the constrained optimization \eqref{deadbeat opt finite} 
    %(subject to \eqref{initial constraint}--\eqref{state and control constraint n} and \eqref{terminal constraint N} with $i=0,\cdots,N-1$) 
    is feasible initially, and the MPC is implemented by \eqref{deadbeat opt finite}--\eqref{receding horizon imp 2},
	then,
	\begin{enumerate}
		\item[1)] the constrained optimization \eqref{deadbeat opt finite} %(subject to \eqref{initial constraint}--\eqref{state and control constraint n} and \eqref{terminal constraint N} with $i=0,\cdots,N-1$) 
        is feasible recursively;
		%\item[2)] the closed-loop system %with the finite-time MPC \eqref{deadbeat opt finite}--\eqref{receding horizon imp 2} 
        %is asymptotically stable;
		\item[2)] there exist a finite $T>0$, such that the closed-loop state $x(k)=0$ whenever $k>T$.
	\end{enumerate}
\end{theorem}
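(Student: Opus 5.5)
The plan follows the structure of the proof of Theorem \ref{Thm 1}, with the infinite tail replaced by the terminal cost and the terminal feedback $u=-Kx$ on $\mathcal{X}_f$.

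\emph{Recursive feasibility (statement 1).} Let $x^\ast(i|k)$, $u^\ast(i|k)$ be optimal at time $k$. At time $k+1$ I would use the standard shifted candidate:
\begin{align*}
\tilde u(i|k+1)=u^\ast(i+1|k),~i=0,\dots,N-2,\qquad \tilde u(N-1|k+1)=-Kx^\ast(N|k).
\end{align*}
The states $\tilde x(i|k+1)=x^\ast(i+1|k)$ for $i\le N-1$ satisfy the constraints because they come from the previous feasible solution. The appended step is admissible by \eqref{inv state constraint}: since $x^\ast(N|k)\in\mathcal{X}_f$, we have $-Kx^\ast(N|k)\in\mathcal{U}$ and $\tilde x(N|k+1)=(A-bK)x^\ast(N|k)\in\mathcal{X}_f\subset\mathcal{X}$. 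This is the standard terminal-set argument and needs no further work.

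\emph{Cost decrease.} Using the same candidate, I would compare $J_f$ at $k+1$ with $J_f^\ast(k)$. The stage terms for indices $n+1,\dots,N-1$ of the old solution reappear as indices $n,\dots,N-2$ of the candidate. The new stage term at index $N-1$ is $\|x^\ast(N|k)\|_Q^2+\|Kx^\ast(N|k)\|_R^2$, and the terminal cost changes from $\|x^\ast(N|k)\|_P^2$ to $\|(A-bK)x^\ast(N|k)\|_P^2$. By the Lyapunov equation \eqref{Lyapunov eqn}, these last three contributions cancel exactly. Hence
\[
J_f^\ast(k+1)-J_f^\ast(k)\le-\|x^\ast(n|k)\|_Q^2-\|u^\ast(n|k)\|_R^2\le-\|x^\ast(n|k)\|_Q^2 .
\]
If $N=n$ there are no stage terms, and I would instead use $-\|x^\ast(n|k)\|_P^2$ from the terminal cost, as in \cite{zhu2024finite}. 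In either case $J_f^\ast\ge0$ is nonincreasing, so the differences are summable and $x^\ast(n|k)\to0$.

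\emph{Asymptotic convergence, then finite time (statement 2).} Next I would repeat the algebra of \eqref{last row of xn}--\eqref{closed-loop}: because $S$ is invertible, $u(k)=S_n^T\big(x^\ast(n|k)-A^nx(k)\big)$. This gives the closed loop
\[
x(k+1)=(A-bK_{db})x(k)+bS_n^Tx^\ast(n|k).
\]
Here $A-bK_{db}$ is nilpotent, so it is an input-to-state stable system driven by a vanishing input, and therefore $x(k)\to0$.

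The final step, which I expect to be the main obstacle, is to show that near the origin the optimizer actually makes $x^\ast(n|k)=0$. Consider the deadbeat candidate $u(i|k)=-K_{db}x(i|k)$ for $i<n$, with zeros afterwards. It gives $x(i|k)=0$ for all $i\ge n$, so the candidate cost is $J_f=0$, which is the global minimum. For small $\|x(k)\|$ this candidate satisfies all constraints, since the sets contain the origin; I would assume, as the paper implicitly does, that the origin lies in their interior and $0\in\mathcal{X}_f$. Any minimizer must also have zero cost, and $Q>0$ (or $P>0$ when $N=n$) then forces $x^\ast(n|k)=0$ whether or not the minimizer is unique. I would state this uniqueness-free argument explicitly, because it is the delicate point.

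Consequently, once $\|x(k)\|$ is below a threshold $\delta$ reached at some finite $T_1$, the closed loop becomes $x(k+1)=(A-bK_{db})x(k)$. One also needs $x(k+1)$ to stay inside the $\delta$-ball for this to persist. Nilpotency gives $\|x(k+j)\|\le c\|x(k)\|$ for $j\le n$, so I would choose $T_1$ with $\|x(k)\|\le\delta/\max(1,c)$. Then $x(k)=0$ for all $k\ge T_1+n=:T$.
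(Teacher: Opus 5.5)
Your proposal is correct and follows essentially the paper's route: the paper identifies $\|x^\ast(N|k)\|_P^2$ with the infinite tail cost generated by $u=-Kx$ and then reruns the proof of Theorem~\ref{Thm 1}. Those steps are the shift, the decrease by $\|x^\ast(n|k)\|_Q^2$, the deadbeat-gain closed loop, and the deadbeat optimum near the origin. The tail identity is just the telescoped form of your one-step Lyapunov cancellation, and your zero-cost argument with the interior-of-constraints assumption makes the final step more careful than the paper's.

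One small correction: for $N=n$ the decrement is not $-\|x^\ast(n|k)\|_P^2$. The same cancellation gives $-\|x^\ast(n|k)\|_Q^2-\|Kx^\ast(n|k)\|_R^2$, which still yields $x^\ast(n|k)\to 0$.
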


{
\textbf{\textit{Proof:}}
With the terminal constraint \eqref{terminal constraint N} and the Lyapunov equation \eqref{Lyapunov eqn}, it holds that, at time $k$, there exists $u^\ast(i|k)$ for $i=N+1, N+2, \cdots$ such that
\begin{align*}
    \|x^\ast(N|k)\|^2_P = \sum_{i=N}^\infty \left(\|x^\ast(i|k)\|^2_Q + \|u^\ast(i|k)\|^2_R\right),
\end{align*}
and therefore
\begin{align*}
	{J}^\ast_f(k) = \sum_{i=n}^\infty \left(\left\|x^\ast(i|k)\right\|^2_Q+\left\|u^\ast(i|k)\right\|_R^2\right).
\end{align*}
The rest of the proof can be completed by following the steps in Theorem \ref{Thm 1}.
\hfill $\square$
}

\begin{remark}
Although it is finite, the control horizon $N$ can be chosen sufficiently large to improve initial feasibility.
\end{remark}

\begin{remark}
Neither terminal equality constraints nor switching techniques are applied in the implementable finite-time MPC.
\end{remark}

\begin{remark}
    Results in \cite{zhu2024finite} can be regarded as the special case of Theorem \ref{thm finite} with $N=n$.
\end{remark}

The running steps of the proposed finite-time MPC can be summarized in Algorithm 1.

\begin{algorithm}
	%\textsl{}\setstretch{1.8}
	\renewcommand{\algorithmicrequire}{\textbf{Offline:}}
	\renewcommand{\algorithmicensure}{\textbf{Online:}}
	\caption{Constrained finite-time MPC for single-input discrete-time linear systems}
	\label{alg1}
	\begin{algorithmic}[1]
	\REQUIRE
	\STATE	Set the control horizon greater than the system dimension, i.e. $N>n$; set $Q$ and $R$ to be positive definite matrices.
	\STATE  Find a feedback gain $K$ such that eigenvalues of $A-BK$ are inside the unite circle.
	\STATE Solve the terminal weighting matrix $P$ from Lyapunov equation \eqref{Lyapunov eqn}.
	\STATE Calculate the terminal constraint via \eqref{calculation of terminal set} and \eqref{quadratic prog of terminal set}.
	\ENSURE
	\FOR{$k=0$ to the end of task}
	\STATE Measure the system state $x(k)$.
	\STATE Solve the optimization \eqref{deadbeat opt finite} subject to constraints \eqref{initial constraint}--\eqref{state and control constraint n} and \eqref{terminal constraint N} to obtain the optimal control sequence $U^\ast(k)$.
	\STATE Apply $u(k)=[1,0,\cdots,0]U^\ast(k)$ to system \eqref{linear syst}.
	\ENDFOR
	\end{algorithmic} 
\end{algorithm}

%-----------------------------------------------------------------------
\subsection{Extension to multi-input linear systems}

Consider the controllable multi-input system \eqref{multi input syst}
%\begin{align}
	%\begin{split}
%		\label{multi input syst}
%		x(k+1) =
%		Ax(k) + \sum_{j=1}^m b_ju_j(k),
	%\end{split}
%\end{align}
subject to constraints \eqref{constraints multi}.
%$m>1$ denotes the number of inputs, and 
%$b_j$ denotes the $j$-th column of the matrix $B$. 
%The pair $(A, B)$ is assumed to be fully controllable.
%
%The controllable multi-input system \eqref{multi input syst} 
It can be transformed into a decoupled form \cite{Wonham1974Linear}:
\begin{align}\label{decouled syst}
	z(k+1) = Fz(k) +G u(k),
\end{align}
where $z = Mx$ is the linear transformation, and
\begin{align*}
	F =& MAM^{-1} = \begin{bmatrix}
		F_{11} &F_{12} &\cdots &F_{1q}\\
		0 &F_{22} &\cdots &F_{2q}\\
		\vdots & &\ddots &\vdots\\
		0 &0 &\cdots &F_{qq}
	\end{bmatrix},\\
G =& MB = \left[\begin{array}{c:c:c:c}
	g_1 & & & \ast\\
	\hdashline
	0 &\ddots & & \ast\\
	\hdashline
	0 &0 &g_q & \ast
\end{array}\right],
\end{align*}
where, for $j=1,\cdots,q$, $F_{jj}\in\mathbb{R}^{n_j\times n_j}$, $g_j\in\mathbb{R}^{n_j\times 1}$, $\sum_{i=j}^{q}n_j=n$, and $(F_{jj}, g_j)$  are controllable. 
Denote $z = [z_1^T, \cdots, z_q^T]^T$, where $z_j \in \mathbb{R}^{n_j}$ for $j=1,\cdots,q$.
Denote $u = [u_1,\cdots,u_q,u_{q+1},\cdots,u_m]^T$.
The controllable multi-input system can be regarded as $q$ subsystems,
with $u_j ~(j=1,\cdots, q)$ being the single input for the $j$-th subsystem,
and $u_l~(l=q+1,\cdots, m)$ are set to zero. 

The cost function can be designed by
\begin{align}\label{opt multi}
	J_m = \sum_{j=1}^{q}\left( \sum_{i=n_j}^{N-1} \left(\|z_j(i|k)\|^2_{Q_j} + \|u_j(i|k)\|^2_{R_j}\right) + \|z_j(N|k)\|^2_{P_j}\right)
\end{align}
where $N$ is the control horizon satisfying $N>n_j$ for all $j=1,2,\cdots, q$;
$Q_j$ and $R_j$ are positive definite matrices; $P_j$ is solved from Lyapunov equation
\begin{align*}
	(F_{jj}-g_jK_j)^TP_j(F_{jj}-g_jK_j)-P_j = -Q_j-K_j^TR_jK_j,
\end{align*}
with $K_j$ selected such that all eigenvalues of $(F_{jj}-g_jK_j)$ are inside the unit circle for all $j=1,2,\cdots,q$.

%\begin{align*}
%	J_i = \sum_{j=n_i}^{N-1} \|z(j|k)\|^2_{Q_i} + \|u(j|k)\|^2_{R_i} + \|z(N|k)\|^2_{P_i}
%\end{align*}

Define predictive control sequence by
\begin{align*}
	U_j(k) = [u_j(0|k),\cdots, u_j(N-1|k)]^T
\end{align*}
for all $j=1,2,\cdots,q$, and set $u_l=0$ for $l=q+1, \cdots, m$.
The optimization for the multi-input finite-time MPC can be constructed by
\begin{align}\label{multi opt}
	[U_j^\ast(k), Z_j^\ast(k)]_{j=1,\cdots,q}=\mathrm{arg}\min_{[U_j(k), Z_j(k)]_{j=1,\cdots,q}} J_m(k)
\end{align}
subject to
\begin{align}
	&z(i+1|k) = Fz(i|k)+Gu(i|k),~~\mbox{for}~i=0,\cdots,N-1,\\
	&z(i+1|k)\in M\mathcal{X},~~u(i|k)\in\mathcal{U},~~\mbox{for}~i=0,\cdots,N-1,\\
	& z(0|k) = Mx(k), \\
	&z_j(N|k)\in \mathcal{Z}_{f,j},~~\mbox{for}~j=1,\cdots,q,  \label{terminal constraint multi}
\end{align}
where $\mathcal{Z}_{f,j}$ are terminal sets satisfying
\begin{align}
	&\bigotimes_{j=1}^q\mathcal{Z}_{f,j}\subset M\mathcal{X}, %\\
	~~\bigotimes_{j=1}^q-K_{j}\mathcal{Z}_{f,j}\subset \mathcal{U}, \label{individual terminal sets 1} \\
	&(F_{jj}-g_jK_{j})\mathcal{Z}_{f,j} \in \mathcal{Z}_{f,j}.   \label{individual terminal sets 3}
\end{align}

The finite-time MPC for multi-input discrete-time linear systems can be implemented by
\begin{align}\label{RH control multi}
	u_i(k) &= [1,0,\cdots,0]_{1\times N}U_i^\ast(k), ~~\mbox{for}~i=1,\cdots,q,\\
	u_l(k) & =0,~~\mbox{for}~l=q+1,\cdots,m.   \label{redundence control}
\end{align}

\begin{theorem}\label{thm MI}
	For multi-input system \eqref{multi input syst} subject to constraints \eqref{constraints multi}, the MPC is designed by 
	\eqref{multi opt}--\eqref{redundence control}. 
 If the constrained optimization \eqref{multi opt}--\eqref{terminal constraint multi} is feasible initially, then the optimization is  recursively feasible, and a finite time $T>0$ exists such that the closed-loop $x(k)=0$ for all $k>T$ .
\end{theorem}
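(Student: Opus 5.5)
We reduce Theorem \ref{thm MI} to the single-input arguments of Theorems \ref{Thm 1} and \ref{thm finite}, applied blockwise in the decoupled coordinates \eqref{decouled syst}. Since $z=Mx$ with $M$ nonsingular, recursive feasibility and finite-time convergence of $z$ are equivalent to those of $x$. The plan is to treat $z_q,z_{q-1},\dots,z_1$ in backward order. The block-triangular structure means $z_q$ evolves autonomously under $u_q$: $z_q(k+1)=F_{qq}z_q(k)+g_qu_q(k)$. Each block $z_j$ is driven by its own input $u_j$ and by the lower blocks $z_{j+1},\dots,z_q$, together with the inputs $u_{j+1},\dots,u_q$ through the $\ast$ entries of $G$.

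\emph{Recursive feasibility.} Use the tail method, as in Theorem \ref{thm finite}. Shift the optimal sequences $U_j^\ast(k)$ by one step and append $u_j(N|k)=-K_jz_j^\ast(N|k)$ for each $j$, with $u_l=0$ for $l>q$. The invariance conditions \eqref{individual terminal sets 1}--\eqref{individual terminal sets 3} then keep the appended terminal state in $\mathcal{Z}_{f,j}$ and respect $M\mathcal{X}$ and $\mathcal{U}$.

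\emph{Cost decrease.} By the Lyapunov equation for $P_j$, the standard terminal-cost bound holds for each block. Summed over $j$, it gives
\begin{align*}
J_m^\ast(k+1)-J_m^\ast(k)\le -\sum_{j=1}^q \|z_j^\ast(n_j|k)\|_{Q_j}^2 .
\end{align*}
Hence $z_j^\ast(n_j|k)\to 0$ for every $j$.

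\emph{Convergence of each block.} Consider the $j$-th block prediction as a single-input controllable system $(F_{jj},g_j)$ with an additional known input $w_j$ coming from the lower blocks and the off-diagonal input columns. Reproducing \eqref{last row of xn}, we write
\begin{align*}
z_j^\ast(n_j|k)=F_{jj}^{n_j}z_j(k)+S_jU^\ast_{j,[0:n_j-1]}(k)+W_j(k),
\end{align*}
where $S_j=[F_{jj}^{n_j-1}g_j,\dots,g_j]$ is nonsingular. The term $W_j(k)$ is linear in $z_{j+1}(k),\dots,z_q(k)$ and in the predicted lower inputs. Solving for $u_j(k)$ as in \eqref{converging u} yields
\begin{align*}
z_j(k+1)=(F_{jj}-g_jK_{db,j})z_j(k)+(\text{terms vanishing as } k\to\infty),
\end{align*}
where $F_{jj}-g_jK_{db,j}$ is nilpotent. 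Backward induction on $j$ then gives asymptotic convergence of all $z_j$: $z_q$ first, then each higher block sees a decaying perturbation through a nilpotent, hence stable, matrix.

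\emph{Finite-time phase.} Once $z(k)$ is in a small neighbourhood of the origin, all constraints are inactive. The unconstrained minimizer of $J_m$ attains $J_m=0$ by choosing each $u_j$ so that $z_j(i|k)=0$ for $i\ge n_j$. This is achievable sequentially from $j=q$ down to $j=1$: after the lower blocks vanish, the coupling terms vanish. Thus, after a finite $T_1$, the closed loop coincides with a block-triangular system whose diagonal blocks are deadbeat, so $z(k)=0$ after at most $T_1+\sum_j n_j=T_1+n$ steps.

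\emph{Main obstacle.} The delicate step is the coupling: the block cost only penalizes $z_j$ from $i=n_j$ on, but the upper blocks are perturbed by lower-block inputs. Two things must be checked. First, the $\ast$ columns of $G$ do not destroy the deadbeat structure; this is where I would choose $u_{j'}$ for $j'>j$ (or restrict coupling) and rely on the triangular ordering. Second, the unconstrained optimum with cost zero is actually attained simultaneously for all blocks. I would first verify that the predicted lower-block trajectories vanish for $i\ge n_{j'}$, so that $W_j$ contributes only finitely many nonzero terms that the $n_j$ free inputs of block $j$ can cancel.
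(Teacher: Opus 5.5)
Your route differs from the paper's, and it has a genuine gap at the recursive-feasibility and cost-decrease step, which you treat as routine; the coupling already bites there. If you append $-K_lz_l^\ast(N|k)$ for every block, the new terminal state of block $j$ is
\begin{align*}
\tilde z_j(N|k+1)=&\,(F_{jj}-g_jK_j)\,z_j^\ast(N|k)+\sum_{l>j}F_{jl}\,z_l^\ast(N|k)\\
&+(\text{off-diagonal } G\text{-terms in } K_lz_l^\ast(N|k),~l>j).
\end{align*}
Condition \eqref{individual terminal sets 3} controls only the first term. So neither $\tilde z_j(N|k+1)\in\mathcal{Z}_{f,j}$ nor $\tilde z(N|k+1)\in M\mathcal{X}$ follows. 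The argument goes through only for block $q$, whose rows of $F$ and $G$ carry no coupling once $u_l=0$ for $l>q$.

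The same problem hits the cost decrease. The per-block Lyapunov equation gives $\|(F_{jj}-g_jK_j)z_j\|_{P_j}^2-\|z_j\|_{P_j}^2=-\|z_j\|_{Q_j}^2-\|K_jz_j\|_{R_j}^2$ only for the diagonal part. But $\|\tilde z_j(N|k+1)\|_{P_j}^2$ also contains sign-indefinite cross terms and a nonnegative quadratic in $z_l^\ast(N|k)$, $l>j$. In general, $\sum_j\|z_j\|_{P_j}^2$ is not a Lyapunov function for the coupled terminal loop $F-G\,\mathrm{diag}(K_1,\dots,K_q,0)$. Hence $J_m^\ast(k+1)-J_m^\ast(k)\le-\sum_j\|z_j^\ast(n_j|k)\|_{Q_j}^2$ is not established, and both your convergence step and your finite-time step rest on it.

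A joint argument of your type needs terminal ingredients built for the coupled block-triangular loop. For example, use the weight $\mathrm{diag}(\alpha_1P_1,\dots,\alpha_qP_q)$ with $\alpha_1>1$ and each $\alpha_{j+1}$ large relative to $\alpha_1,\dots,\alpha_j$; a Schur-complement induction shows this works. Take a small sublevel set of it as the terminal set. This is more than the statement provides.

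For comparison, the paper never forms a joint decrease. It observes that $z_q$ is autonomous and invokes Theorem \ref{thm finite} to get $z_q\equiv 0$ after finitely many steps. It then notes that $z_{q-1}$ reduces to $F_{q-1,q-1}z_{q-1}+g_{q-1}u_{q-1}$, and repeats up to $z_1$. This is a closed-loop cascade in which the cross terms above never appear, and it is shorter than your joint-Lyapunov-plus-perturbation argument. Be aware, though, that it also does not check invariance of $\bigotimes_j\mathcal{Z}_{f,j}$ under the coupled dynamics. It also takes for granted that the joint minimizer treats block $q$ as the standalone problem of Theorem \ref{thm finite}. So copying the cascade will not supply the missing ingredient.

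There is a second, smaller gap in your finite-time phase. Zero cost is attainable only if $n_l\le n_j$ whenever $l>j$, for example by ordering the decomposition so that $n_1\ge\cdots\ge n_q$. The only free inputs of block $j$ are $u_j(0|k),\dots,u_j(n_j-1|k)$, because later ones are penalized. If $n_l>n_j$, the generically nonzero $z_l(i|k)$ for $n_j\le i<n_l$ re-excite $z_j(i+1|k)$ through $F_{jl}$. The unconstrained optimum then has positive cost, and the resulting linear law need not be deadbeat.
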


\textbf{\textit{Proof:}}
%The proof can be completed by combining the proofs of Theorem \ref{thm finite} in this paper and Theorem 3 in \cite{zhu2024finite}. It is omitted here.
%
Consider the dynamics of $z_q$ in \eqref{decouled syst}:
\begin{align*}
    z_q(k+1) = F_{qq}z_q(k) + g_q u_q(k).
\end{align*}
Provided that the constrained optimization \eqref{multi opt}--\eqref{terminal constraint multi} is feasible initially,
it follows from Theorem \ref{thm finite} that $z_q$ converges to zero with finite steps. 

Then, the dynamics of $z_{q-1}$ becomes
\begin{align*}
    z_{q-1}(k+1) = F_{q-1,q-1}z_{q-1}(k) + g_{q-1} u_{q-1}(k),
\end{align*}
and Theorem \ref{thm finite} applies such that $z_{q-1}$ converges to zero with finite steps. 

The above process repeats, and all $z_j~(j=1,2,\cdots,q)$ converge to zero with finite steps.
\hfill $\square$

%-------------------------------------------------------------------------
\subsection{Extension to constrained nonlinear systems}

In Section \ref{sec problem}, it is assumed that the nonlinear system \eqref{nonlinear syst} is feedback linearizable in a compact set $\mathcal{D}$ containing the origin.
Here, we suppose that the nonlinear transformation (or nonlinear diffeomorphism) is $z=M(x)$, and the corresponding control is $u=\gamma(x,v)$, %(or, conversely, $v=\mu(x,u)$), 
such that the linearized system can be calculated by
\begin{align*}
	z(k+1) = Gz(k) + F v(k),
\end{align*}
where $(G,~F)$ is controllable.
%Moreover, it is supposed that the relationship between $u$ and $v$ can also be written in $v=\mu(x,u)$.
For detailed criteria for feedback linearization of discrete-time nonlinear systems, Please see \cite{aranda1996linearization}.

The optimization for the nonlinear finite-time MPC can be formulated by
\begin{align}
	\begin{split}\label{deadbeat opt nonlinear}
		[U^\ast(k), ~X^\ast(k)] =
  \mathrm{arg}\min_{U(k), ~X(k)}& \sum_{i=n}^{N-1}\left(\|x(i|k)\|^2_Q+\|u(i|k)\|_R^2\right)\\
		 &+\|x(N|k)\|^2_P,
	\end{split}
\end{align}
subject to the initial constraint \eqref{initial constraint} and
\begin{align}
%& x(0|k) = x(k), \\
	&x(i+1|k) = f(x(i|k), u(i|k)), ~~\mbox{for}~i=0,1,\cdots, N-1,      \\
    & u(i|k)\in \mathcal{U},~~\mbox{for}~i=0,1,\cdots,N-1, \label{nonlinear control constraint}\\
	& x(i|k)\in \mathcal{X}\cap \mathcal{D},~~\mbox{for}~i=1,\cdots, N, \label{nonlinear state constraint}\\
	& x(N|k) \in \mathcal{X}_f, \label{terminal nonlinear}
\end{align}
where
$\mathcal{X}_f\subset \mathcal{X}\cap \mathcal{D}$ is the terminal invariant set, and %$\mathcal{X}_f\subset\mathcal{X}\cap \mathcal{D}$, and 
%for all $x\in\mathcal{X}_f$, 
there exists $u=-Kx$ fulfilling %all eigenvalues of $(A-bK)$ are inside the unit circle, and
\begin{align*}
	-Kx\in \mathcal{U},~f\left(x, -Kx\right)\in \mathcal{X}_f,~\forall x\in\mathcal{X}_f.
	%\gamma(x, -KM(x))\in\mathcal{U}.
\end{align*}
The matrix $P$ is solved from Lyapunov equation \eqref{Lyapunov eqn} by using Jacobian matrices:
\begin{align*}
    A = \left.\frac{\partial f}{\partial x}\right|_{x=0,u=0},~b = \left.\frac{\partial f}{\partial u}\right|_{x=0,u=0}.
\end{align*}

The constrained nonlinear finite-time MPC is implemented by 
\begin{equation}\label{receding horizon imp nonl}
    u(k) = [1,0,\cdots,0]_{1\times N}U^\ast(k),
\end{equation}
where $U^\ast(k)$ is solved from nonlinear optimization \eqref{deadbeat opt nonlinear}. 

\begin{theorem}\label{thm nonlinear}
	Consider the nonlinear plant \eqref{nonlinear syst} subject to constraints \eqref{constraints}.
	%The constrained optimization is formulated by \eqref{deadbeat opt nonlinear}--\eqref{terminal nonlinear}.
	If the optimization \eqref{deadbeat opt nonlinear}--\eqref{terminal nonlinear} is feasible initially, and the implementation \eqref{receding horizon imp nonl} is applied, then
	\begin{enumerate}
		\item[1)] the optimization \eqref{deadbeat opt nonlinear}--\eqref{terminal nonlinear} is feasible recursively;
		%\item[2)] $x=0$ of the closed-loop system %with \eqref{deadbeat opt nonlinear}--\eqref{receding horizon imp nonl} 
        %is asymptotically stable;
		\item[2)] a finite time $T>0$ exists such that $x(k)=0$ for all $k>T$.
	\end{enumerate}
\end{theorem}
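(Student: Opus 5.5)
The plan is to follow the template of Theorem \ref{Thm 1} and Theorem \ref{thm finite}. The nonlinear problem \eqref{deadbeat opt nonlinear} is read as a finite-horizon surrogate of an infinite-horizon problem whose stage costs are summed from $i=n$. The key structural facts come from the feedback linearization: $z=M(x)$ is a diffeomorphism on $\mathcal{D}$ with $M(0)=0$, $u=\gamma(x,v)$, and the linearized dynamics $z(k+1)=Gz(k)+Fv(k)$ are controllable. Hence the reachability map from $(x(k), u(0|k),\dots,u(n-1|k))$ to $x(n|k)$ is locally invertible in the control arguments, exactly as the matrix $S$ was in \eqref{last row of xn}.

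\textbf{Step 1: recursive feasibility.} Given an optimal solution at time $k$, build the candidate at $k+1$ in the usual way:
\begin{itemize}
\item shift the sequence, setting $\tilde u(i|k+1)=u^\ast(i+1|k)$ for $i=0,\dots,N-2$;
\item append the terminal control $\tilde u(N-1|k+1)=-Kx^\ast(N|k)$.
\end{itemize}
By the stated terminal-set property ($-Kx\in\mathcal{U}$ and $f(x,-Kx)\in\mathcal{X}_f$ for all $x\in\mathcal{X}_f$), together with $\mathcal{X}_f\subset\mathcal{X}\cap\mathcal{D}$, all constraints \eqref{nonlinear control constraint}--\eqref{terminal nonlinear} are satisfied.

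\textbf{Step 2: cost decrease.} This needs a local terminal-cost inequality
\[
\|f(x,-Kx)\|_P^2-\|x\|_P^2\le -\|x\|_Q^2-\|Kx\|_R^2 \quad \text{on } \mathcal{X}_f .
\]
Since $P$ comes from \eqref{Lyapunov eqn} for the Jacobian pair $(A,b)$, this holds only up to higher-order terms. I would therefore first shrink $\mathcal{X}_f$ to a small enough sublevel set of $x^TPx$, and solve \eqref{Lyapunov eqn} with $Q$ slightly enlarged (say $2Q$) so the linearization error is absorbed. With this inequality, the same telescoping as in Theorem \ref{Thm 1} gives
\[
J^\ast(k+1)-J^\ast(k)\le -\|x^\ast(n|k)\|_Q^2 ,
\]
so $x^\ast(n|k)\to0$.

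\textbf{Step 3: asymptotic convergence of the closed-loop state.} Pass to $z$-coordinates and write $v^\ast(i|k)$ for the transformed controls. Then
\[
z^\ast(n|k)=G^nz(k)+S_zV^\ast_{[0:n-1]}(k),
\]
with $S_z$ nonsingular. Exactly as in \eqref{closed-loop}, this gives
\[
z(k+1)=(G-FK_{db})z(k)+FS_n^Tz^\ast(n|k),
\]
where $G-FK_{db}$ is nilpotent. A nilpotent system driven by a vanishing input gives $z(k)\to0$, hence $x(k)=M^{-1}(z(k))\to0$.

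\textbf{Step 4: finite-time convergence.} Once $x(k)$ is small enough, every constraint in \eqref{nonlinear control constraint}--\eqref{terminal nonlinear} becomes inactive along the deadbeat trajectory generated by $v=-K_{db}z$, $u=\gamma(x,v)$. That trajectory satisfies $x(i|k)=0$ for all $i\ge n$, so it attains cost zero and is therefore optimal. Optimality then forces $x^\ast(n|k)=0$, so the applied control reproduces the deadbeat law and $x$ reaches zero within $n$ further steps, giving $T=T_1+n$. The stage costs are positive definite, so any optimal solution must also have $x^\ast(i|k)=0$ for $i\ge n$. Uniqueness of the optimizer is therefore not needed; only the fact that $x^\ast(n|k)=0$ at the applied step matters.

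\textbf{Main obstacle.} I expect Step 2 to be the hardest. As stated, $P$ is obtained from the linearization, so the terminal cost is only a local control Lyapunov function. The cost-decrease inequality therefore requires the shrinking argument above, or an explicit assumption on $\mathcal{X}_f$, which I would add as a hypothesis if the argument does not close. A secondary delicacy in Step 4 is verifying that the deadbeat predicted trajectory stays in $\mathcal{D}$ and satisfies the constraints for $x(k)$ near zero; this follows from continuity of $M$, $M^{-1}$ and $\gamma$ at the origin.
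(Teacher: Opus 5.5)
Your argument is correct once the terminal-cost condition you add in Step 2 is accepted, but after the common tail-based feasibility step it follows a different route from the paper. The paper does three things:
\begin{enumerate}
\item[(a)] it asserts that $J_n^\ast(k)$ is a Lyapunov function, so $x^\ast(n|k)\to0$;
\item[(b)] it proves $x(k)\to0$ by contradiction, reasoning that the plan $u^\ast(i|k)$, $i=0,\dots,n-1$, brings the state to $x^\ast(n|k)$;
\item[(c)] it declares the problem near the origin equivalent to an unconstrained infinite-horizon problem in $z$-coordinates, so Theorem~\ref{Thm 1} applies.
\end{enumerate}
You replace each of these:
\begin{enumerate}
\item[(i)] You observe that a $P$ from the Jacobian Lyapunov equation gives $\|f(x,-Kx)\|_P^2-\|x\|_P^2\le-\|x\|_Q^2-\|Kx\|_R^2$ only up to $o(\|x\|^2)$ terms. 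You restore it with a margin in the Lyapunov equation and a small sublevel-set terminal region.
\item[(ii)] You use the exact linearity of the $z$-dynamics along predictions confined to $\mathcal{D}$. This writes the receding-horizon closed loop as the nilpotent system $z(k+1)=(G-FK_{db})z(k)+F S_n^T M(x^\ast(n|k))$, driven by a vanishing input.
\item[(iii)] You show that near the origin the deadbeat plan is feasible with cost exactly zero, so every optimizer has $x^\ast(n|k)=0$.
\end{enumerate}

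Each replacement supplies rigor the paper's version lacks.
\begin{enumerate}
\item[(a)] The paper never justifies the cost decrease for nonlinear $f$. The identity $\|x^\ast(N|k)\|_P^2=\sum_{i\ge N}(\|x^\ast(i|k)\|_Q^2+\|u^\ast(i|k)\|_R^2)$ used for Theorem~\ref{thm finite} fails in the nonlinear case.
\item[(b)] Its contradiction step treats the open-loop plan as if it were applied in closed loop, which the receding-horizon law does not do.
\item[(c)] Its appeal to Theorem~\ref{Thm 1} again presumes that the terminal cost equals the infinite tail cost. Your zero-cost argument never needs this.
\end{enumerate}
The paper's version buys brevity and a uniform narrative with the linear case; yours gives a self-contained argument in which each step is actually justified.

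The price is that you prove the theorem under strengthened terminal ingredients, not exactly as stated. Present this as an explicit hypothesis rather than a fallback. It is the standard quasi-infinite-horizon condition, and it is consistent with the paper's remark that a conservative small $\mathcal{X}_f$ may be used. Also state the routine facts you use implicitly:
\begin{enumerate}
\item[$\bullet$] $\gamma(x,\cdot)$ is invertible, so $v^\ast(i|k)$ is defined;
\item[$\bullet$] $\gamma(0,0)=0$, so the deadbeat plan has zero input cost for $i\ge n$;
\item[$\bullet$] the origin lies in the interior of the constraint sets.
\end{enumerate}
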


\textbf{\textit{Proof:}}
%The proof can be completed by combining the proofs of Theorem \ref{thm finite} in this paper and Theorem 4 in \cite{zhu2024finite}. It is omitted here.
%
%Consider the optimal cost function as the Lyapunov candidate:
%\begin{align*}
%    J^\ast(k) = \sum_{i=n}^{N-1}\left(\|x^\ast(i|k)\|^2_Q+\|u^\ast(i|k)\|_R^2\right)
%		 +\|x^\ast(N|k)\|^2_P.
%\end{align*}
%Since the nonlinear function $f(x,u)$ is differentiable with respect to $x$ and $u$, and $f(0,0)=0$, it holds that there exists constants $L_x$ and $L_u$ such that
%\begin{align*}
%\|x^\ast(1|k)\|^2=&\|f(x(k), u^\ast(0|k))\|^2\leq L_x\|x(k)\|^2+L_u\|u(0|k)\|^2,\\
%\|x^\ast(2|k)\|^2 =& \cdots \leq L_x\|x^\ast(1|k)\|^2 +L_u \|u(1|k)\|^2\\
%\leq & L_x^2\|x(k)\|^2 + L_xL_u\|u(0|k)\|^2 + L_u \|u(1|k)\|^2,\\
%&\vdots\\
%\|x^\ast(N|k)\|^2 \leq & L_x^N\|x(k)\|^2 + \sum_{i=0}^{N-1}L_x^{N-i-%1}L_u\|u(i|k)\|^2
%\end{align*}
%in the finite region $x\in \mathcal{X}$.
%
%Use the optimal state and control sequences as feasible state and control sequences at the next time instant:
%\begin{align*}
%    \tilde{x}(i|k+1) = x^\ast(i+1|k),~\tilde{u}(i|k+1) = u^\ast(i+1|k).
%\end{align*}
%Its variation can be calculated by
%\begin{align*}
%    J^\ast(k+1)-J^\ast(k)\leq \tilde{J}(k+1)-J^\ast(k)
%    = -\|x^\ast(n|k)\|^2_Q
%\end{align*}
%
1) By using the tail method, it is straightforward to prove the recursive feasibility of optimization. 

2) It is also straightforward to claim $x^\ast(n|k)$ converges to the origin as $k\to +\infty$ by using the optimal cost function
\begin{align*}
    J_n^\ast(k) = \sum_{i=n}^{N-1}\left(\|x^\ast(i|k)\|^2_Q+\|u^\ast(i|k)\|_R^2\right)
		 +\|x^\ast(N|k)\|^2_P
\end{align*}
as Lyapunov candidate.
Convergence of $x^\ast(n|k)$ indicates that, for any $\epsilon>0$, there exists $T_\epsilon>0$, such that $\|x^\ast(n|k)\|<\epsilon$ for all $k>T_\epsilon$.

Asymptotic stability of $x=0$ can be proved by contradiction.
Suppose that $x(k)$ does not converge to zero as $k\to +\infty$; that is,
a constant $\epsilon_m>0$ exists, such that for any finite time $T_\epsilon$,
it holds that
\begin{align}\label{contradiction}
    \|x(k)\|\geq \epsilon_m,
\end{align}
for some $k>T_\epsilon$.

Take $\epsilon<\epsilon_m$ and $k>T_\epsilon$. 
Based on initial and recursive feasibility, whenever $\|x(k)\|\geq \epsilon_m>0$, a feasible control sequence $U^\ast(k)$ exists such that 
\begin{align*} 
\|x^\ast(n|k)\|<\epsilon<\epsilon_m\leq \|x(k)\|,
\end{align*}
implying that feasible control
\begin{align*}
    u(k+i) = u^\ast(i|k),~i=0,\cdots,n-1,
  %  &u(k+1) = u^\ast(1|k),\\
  %  &\cdots\\
  %  & u(k+n-1)=u^\ast(n-1|k)
\end{align*}
drive the closed-loop system to $x(k+n)=x^\ast(n|k)$.
It indicates that, $\|x(k)\|< \epsilon$ for all $k>T_\epsilon+n$,
violating \eqref{contradiction}. 

As a consequence, it holds that $x=0$ of the closed-loop system is asymptotically stable.

Asymptotic stability of $x=0$ indicates that $x(k)$ finally enters the terminal invariant set, where constraints \eqref{nonlinear state constraint}--\eqref{terminal nonlinear} are inactive. The optimization \eqref{deadbeat opt nonlinear} is then equivalent to
\begin{align*}
		&[U^\ast(k), Z^\ast(k)] =\\
  &\mathrm{arg}\min_{U(k), Z(k)}\sum_{i=n}^{+\infty}\|M^{-1}(z(i|k))\|^2_Q+\|\gamma(M^{-1}(z(i|k)), v(i|k))\|_R^2,
\end{align*}
subject to
\begin{align*}
    &z(i+1|k) = Gz(i|k)+ Fv(i|k),~~\mbox{for}~i=0,1,\cdots, N-1,  \\
	& z(0|k) = M(x(k)),
\end{align*}
and Theorem \ref{Thm 1} applies here.
\hfill $\square$

\begin{remark}
	It can be seen from \eqref{deadbeat opt nonlinear}--\eqref{terminal nonlinear} that, the nonlinear transformation $z=M(x)$ and nonlinear control $u=\gamma(x,v)$ do not have to be calculated explicitly.
	Their existence is sufficient to design the finite-time MPC,
	where a conservative small set can be found to be the terminal set $\mathcal{X}_f$.
\end{remark}

\vspace{5mm}
%===============================================
\section{Numerical examples}\label{sec sim}

%Numerical examples are provided to demonstrate the proposed finite-time MPC for single input linear systems, multi-input linear systems, and nonlinear systems.

Numerical simulations are presented to verify the proposed finite-time MPC on single-input linear, multi-input linear, and nonlinear systems.

%----------------------------------------------
\subsection{Finite-time MPC for single input linear systems}\label{subsec sim single}

Consider the 2nd-order ($n=2$) plant:
\begin{align}\label{simulated A and b}
	A = \begin{bmatrix}
		1.1 &2\\ 0 &0.95
	\end{bmatrix},~~
	b = \begin{bmatrix}
		0\\ 0.079
	\end{bmatrix},
\end{align}
subject to control constraints $|u|\leq 5$.

The finite-time MPC is designed by following Theorem \ref{thm finite},
where the control horizon is selected by $N=8$; the weighting matrix are chosen by $Q = I_{2\times 2}$ and $R=0.1$; 
%the feedback gain is chosen by $K = [4.3~24.7]$ such that all eigenvalues of $(A-bK)$ are inside the unit circle;
a feedback gain $K = [4.3~24.7]$ is selected to ensure that $|\mathrm{eig}(A-BK)|<1$.
%the eigenvalues of $(A-bK)$ lie strictly inside the unit circle.
The terminal weight matrix is solved from Lyapunov equation \eqref{Lyapunov eqn}, and it is obtained by
\begin{align*}
 P = \begin{bmatrix}
		6.7 &22.2\\ 22.2 &106.8
	\end{bmatrix}.
\end{align*}
It follows from \eqref{calculation of terminal set} and \eqref{quadratic prog of terminal set} that the terminal constraint can be calculated by 
\begin{align*}
  \mathcal{X}_f=\{x\in \mathbb{R}^2 | x^TPx\leq 4.15\}.
\end{align*}

%The simulation result is displayed by Fig.~\ref{fig: SI linear},
%where it can be seen that the origin of the closed-loop system is asymptotically stable, and the transient process is completed in 7 steps.
%Constraints are all satisfied.

In Fig.~\ref{fig: SI linear}, the simulation results indicate that the closed-loop system reaches the origin within $7$ steps, and maintains at the origin for all future times, while states and input never violate their constraints.

Initial feasibility of the proposed infinite-horizon finite-time MPC (implemented by finite-horizon MPC with $N=8$) is compared with the previous terminal-cost strategy \cite{zhu2024finite}. The result is illustrated by Fig.~\ref{fig: feasibility},
where it can be witnessed that, with prolonged control horizon, the proposed strategy is capable of enlarging the initial feasibility region significantly.
Both approaches achieve finite-time convergence.
It has to be acknowledged that, with larger control horizon, the computational burden of the proposed finite-time MPC in this paper is greater than that in \cite{zhu2024finite}. 

\begin{figure}
	\begin{center}
		\includegraphics[scale=0.6]{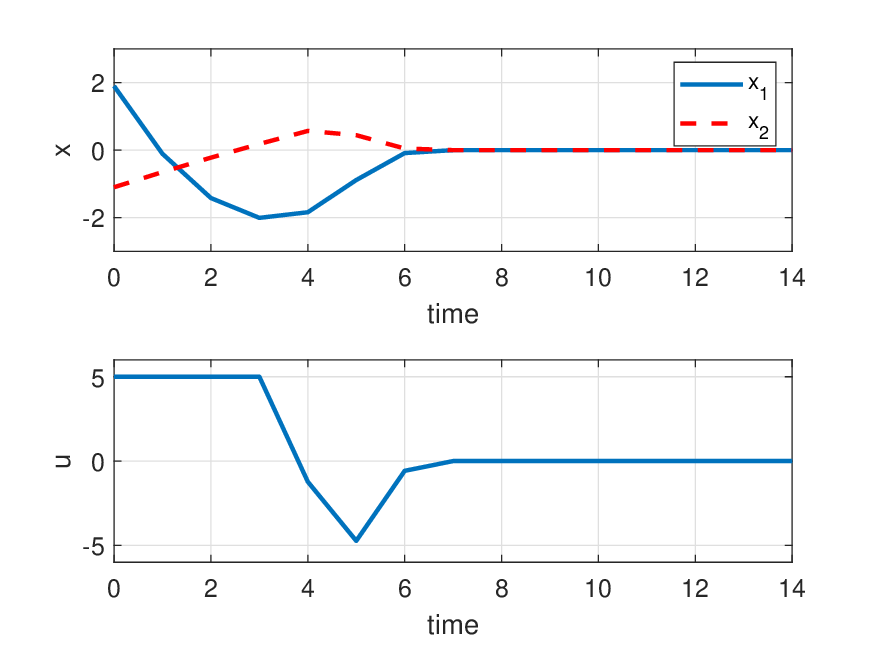}
		\caption{Single-input closed-loop states and control under the proposed infinite-horizon finite-time MPC, with transient response settling in 7 steps.} 
		\label{fig: SI linear}                             
	\end{center}                                
\end{figure}

\begin{figure}
	\begin{center}
		\includegraphics[scale=0.55]{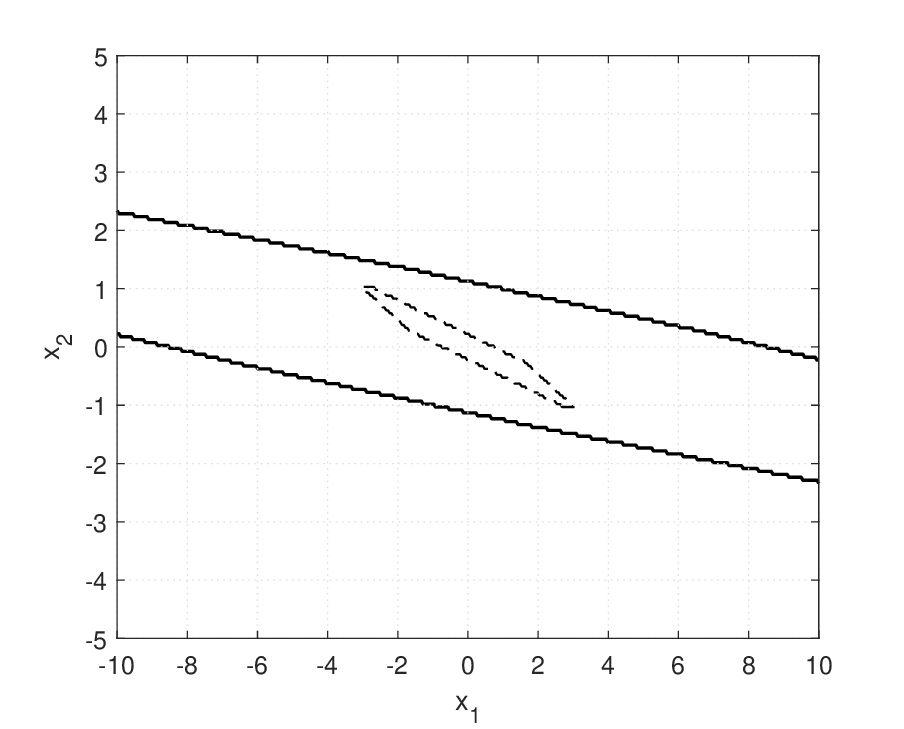}
		\caption{Comparison of initial feasibility regions: the previous terminal-cost strategy (dashed); the proposed infinite-horizon strategy implemented via finite horizon $N=8$ (solid).}
		\label{fig: feasibility}                             
	\end{center}                                
\end{figure}

%-------------------------------------------------------------------
\subsection{Finite-time MPC for multi-input linear systems}

The multi-input linear plant is given by
\begin{align*}
	A = \begin{bmatrix}
		1.1 &2 &-0.4\\ 0 &0.95 &-0.8\\ 0 &0.1 &1
	\end{bmatrix},~
	B = \begin{bmatrix}
		0 &0\\ 0.079 &0\\ -0.1 &0.1
	\end{bmatrix},
\end{align*}
subject to constraints $|u_1|\leq 5$ and $|u_2|\leq 5$.

\begin{figure}
	\begin{center}
		\includegraphics[scale=0.6]{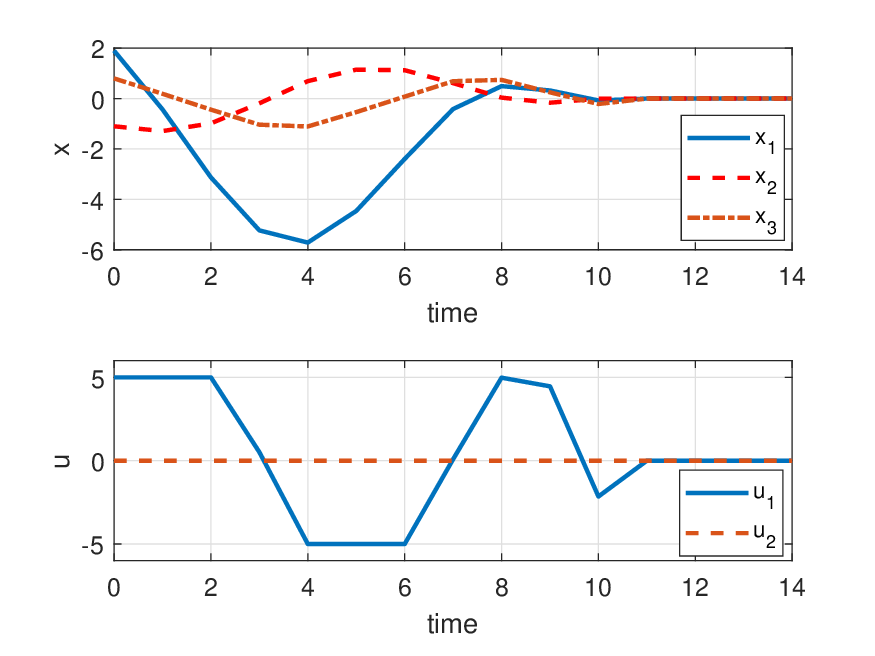}
		\caption{Multi-input closed-loop states and controls under the proposed infinite-horizon finite-time MPC, with transient response settling in 11 steps.} 
		\label{fig: MI linear}                             
	\end{center}                                
\end{figure}

The multi-input finite-time MPC is designed by following Theorem \ref{thm MI},
where the control horizon is selected by $N=8$.
Weighting matrices are selected by $Q_1=I_{2\times2}$, $Q_2=1$, $R_1=0.1$ and $R_2=0.1$.

%The simulation result is shown in Fig.~\ref{fig: MI linear},
%where it can be seen the transient process completes in $11$ steps.
%It can be witnessed that $u_1$ is always within its bound, and $u_2$ remains zero. Zero value of $u_2$ is due to that $q=1$ in its controllable canonical form, and the system can be fully controlled by $u_1$.

The results, depicted in Fig.~\ref{fig: MI linear}, show that the transient process is completed in $11$ steps. The control input $u_1$ consistently satisfies its constraint, whereas $u_2$ remains zero due to the controllable canonical form with $q=1$, which allows full system control through $u_1$.

%-----------------------------------------------------------------
\subsection{Finite-time MPC for nonlinear systems}

The nonlinear plant is given by
\begin{align*}
	x_1(k+1) =& -1.1x_1(k)+ 2\sin x_2(k),\\
	x_2(k+1) =& 0.2x_1(k)x_2(k)+0.79u(k),
\end{align*}
subject to constraints $|x_2|< \pi/2$ and $|u|\leq 2$.

The nonlinear finite-time MPC is designed by following Theorem \ref{thm nonlinear},
where the control horizon is set to $N=8$.
Weighting matrices are selected by $Q=I_{2\times2}$ and $R=0.1$.
%Performances of the proposed nonlinear finite-time MPC are illustrated by Fig.~\ref{fig: nonlinear}, where it can be seen that the transient process completes in 5 steps.
The results of nonlinear finite-time MPC are shown in Fig.~\ref{fig: nonlinear}, where it is displayed that the closed-loop system reaches the origin within $5$ steps.
The state and the control are always within their constraints.

\begin{figure}
	\begin{center}
		\includegraphics[scale=0.6]{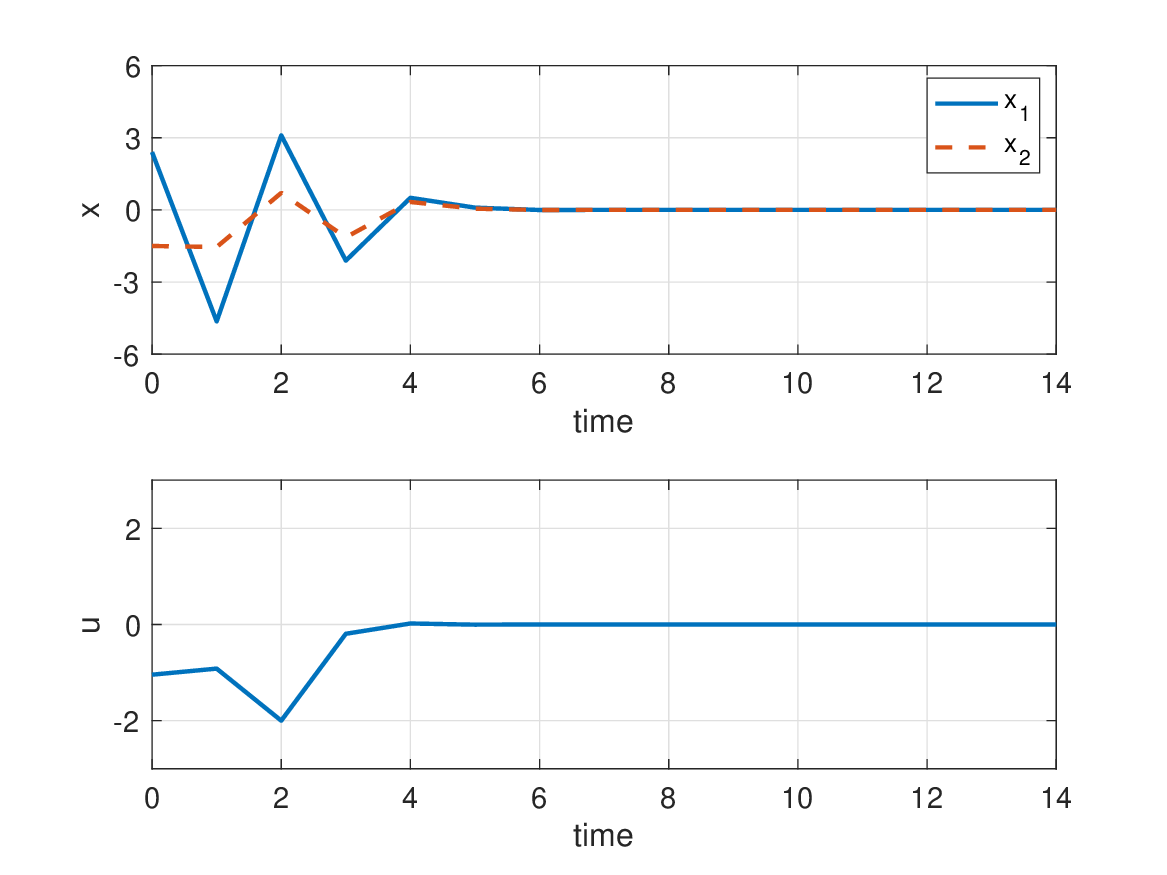}
		\caption{Closed-loop states and control of the nonlinear system under the proposed infinite-horizon finite-time MPC, with transient response settling in 5 steps.} 
		\label{fig: nonlinear}                             
	\end{center}                                
\end{figure}

%-------------------------------

\subsection{Robustness to bounded disturbance}

It is noted that the finite-time convergence is a special form of exponential or asymptotic stability. Due to the robustness of exponential or asymptotic stability, at least ultimate boundedness can be ensured in the presence of bounded disturbances. 

In this section, the example in Section \ref{subsec sim single} is revisited in the presence of bounded random disturbance 
\begin{align*}
    x(k+1) = Ax(k) +b\left(u(k)+w(k)\right),
\end{align*}
where $A$ and $b$ are given in \eqref{simulated A and b}. The disturbance is random at each time step $k$, and it satisfies $|w(k)|\leq1$,
which is $10\%$ of the control constraint $|u|\leq 5$.
The simulation was run for 10 times, and the results are displayed in Fig.~\ref{fig: dist}. It can be witnessed that system states are ultimately bounded, and controls are always within their constraints.

In the presence of small disturbances, recursive feasibility will not be lost, provided that the system state is not perturbed to leave its feasibility region. It is acknowledged that, if there exit excessively large modeling errors or disturbances, the recursive feasibility will possibly be lost.

\begin{figure}
	\begin{center}
		\includegraphics[scale=0.45]{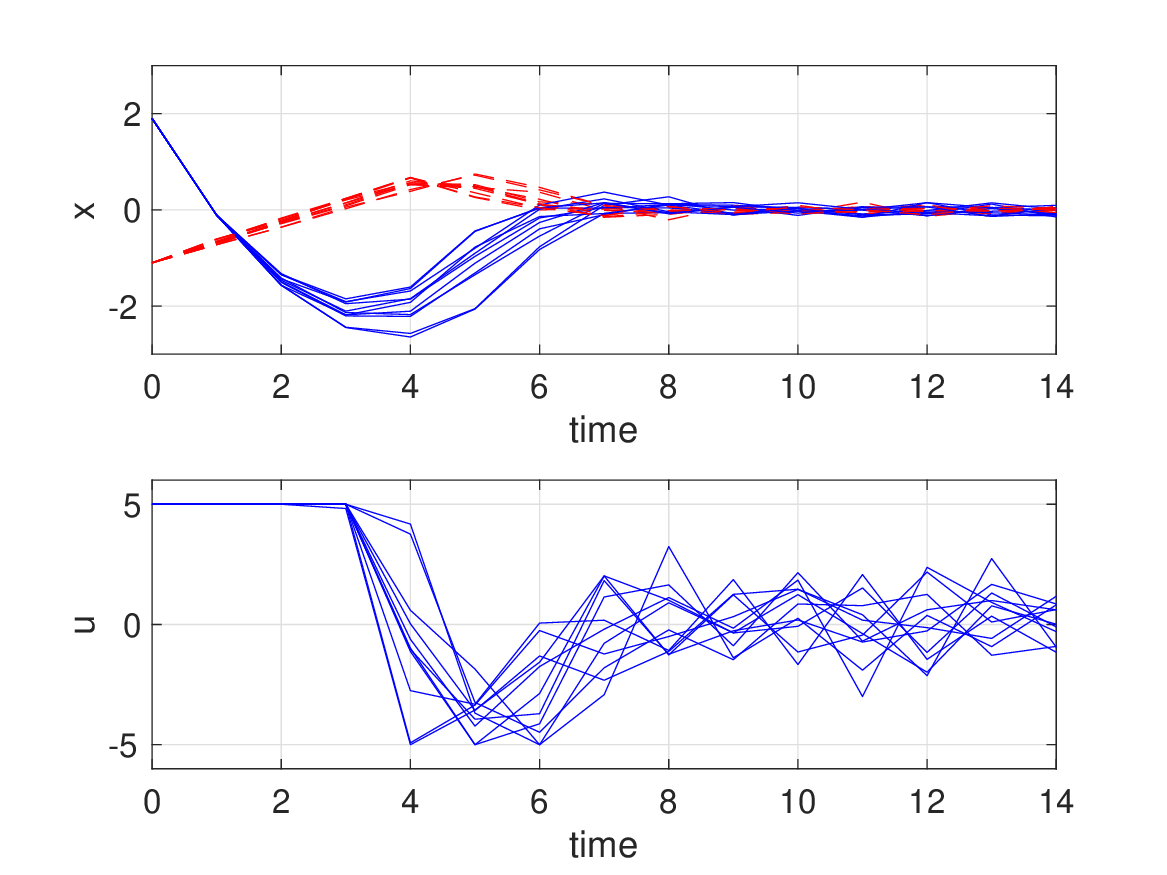}
		\caption{Closed-loop states and control in the presence of bounded random disturbance: system states are ultimately bounded, and controls are within constraints.} 
		\label{fig: dist}                             
	\end{center}                                
\end{figure}

%-----------------------------------
\subsection{Discussion}

Based on the simulation results, advantages of the proposed finite-time MPC approach is evidence: 1) it significantly enlarges the initial feasibility region compared to the previous short-horizon (terminal-cost) strategy;
2) the simulations for the single-input linear, multi-input linear, and nonlinear systems all demonstrate that the closed-loop state converges exactly to the origin in a finite number of steps (e.g., 7 steps for single-input, 11 steps for multi-input, 5 steps for nonlinear) and that the state and control input constraints are never violated; and 3) the method achieves finite-time convergence without relying on the restrictive terminal equality constraint or extra switching strategies, which is a key design feature and advantage over other finite-time MPC schemes.

Guaranteed finite-time convergence properties in the presence of constraints are straightforwardly demonstrated by Figs.~\ref{fig: SI linear}, \ref{fig: MI linear} and \ref{fig: nonlinear}.
The visual result in Fig.\ref{fig: feasibility} clearly show that by utilizing a prolonged control horizon, the proposed infinite-horizon framework, when implemented as a finite-horizon MPC with $N=8$, is capable of stabilizing a significantly larger set of initial conditions. This design choice directly addresses the limited initial feasibility problem suffered by short-horizon finite-time MPCs.
The finite-time convergence is achieved without requiring a terminal equality constraint, which is known to restrict feasibility.
The implementation avoids auxiliary switching strategies within a ``one-step region'', simplifying the control logic.

The proposed MPC achieves finite-time convergence without relying on the restrictive conditions often required by other schemes. However, as a general characteristic of any finite-time control scheme, achieving strict finite-time convergence can often sacrifice transient performance. The focus is on driving the state to zero in a fixed, finite number of steps, which may lead to more "aggressive" control action and a sub-optimal path during the initial phase compared to typical infinite-horizon MPC approaches.

%\vspace{5mm}
%====================================================================
\section{Conclusion}

An infinite-horizon MPC framework is proposed to stabilize the constrained discrete-time system, such that the state of the closed-loop system reaches the origin in finite-time, and maintains thereafter. 
The proposed framework extends the previous terminal cost strategy to infinite-horizon stage cost, and it is implementable through an equivalent finite-horizon strategy.
Since the control horizon is prolonged significantly, the initial feasibility region can be enlarged.
It also avoids terminal equality constraint or switches inside the one-step reachability region.
The expected performances are proved theoretically, and are supported by simulation examples.
Results in this paper can be regarded as a general framework for constrained finite-time MPC design.

The theoretical result proposed in this paper guaranties finite-time stabilization, but the explicit calculation of the finite convergence time $T$ depends on detailed constraints and design parameters. Developing methods to explicitly estimate or calculate a tighter upper bound for the finite convergence time is within our future works.

Assumption of feedback linearizability of nonlinear systems restricts the applicability of the proposed finite-time MPC.
It is within future works to further expand the applicability of the proposed finite-time MPC to nonlinear systems that are not feedback linearizable.

%Future works may include extensions and applications of the proposed finite-time MPC to UGV \cite{Luo2024Robust} and UAV \cite{Sun2023Vision} systems.

%\vspace{5mm}

%\addtolength{\textheight}{-12cm}   % This command serves to balance the column lengths
                                  % on the last page of the document manually. It shortens
                                  % the textheight of the last page by a suitable amount.
                                  % This command does not take effect until the next page
                                  % so it should come on the page before the last. Make
                                  % sure that you do not shorten the textheight too much.

%%%%%%%%%%%%%%%%%%%%%%%%%%%%%%%%%%%%%%%%%%%%%%%%%%%%%%%%%%%%%%%%%%%%%%%%%%%%%%%%

%%%%%%%%%%%%%%%%%%%%%%%%%%%%%%%%%%%%%%%%%%%%%%%%%%%%%%%%%%%%%%%%%%%%%%%%%%%%%%%%

%%%%%%%%%%%%%%%%%%%%%%%%%%%%%%%%%%%%%%%%%%%%%%%%%%%%%%%%%%%%%%%%%%%%%%%%%%%%%%%%

\section*{Appendix}

In this appendix, the solution to the optimization \eqref{deadbeat opt finite} in Step 7 of the proposed finite-time MPC (Algorithm \ref{alg1}) is explained.

At each time step $k$, update $x(0|k)=x(k)$.

The predictive state sequence can be expressed by $$X(k)=Fx(0|k)+\Phi U(k),$$ where $X(k)$ and $U(k)$ are predictive state and control defined in \eqref{predictive x} and \eqref{predictive u}, respectively. The matrices $F$ and $\Phi$ can be calculated by
\begin{align*}
    F = \begin{bmatrix}
        A\\ A^2\\ \vdots\\ A^N
    \end{bmatrix},~\Phi = \begin{bmatrix}
        B &0 &\cdots &0\\ AB &B &\cdots &\vdots\\
        \vdots &\vdots &\ddots &0\\ A^{N-1}B &A^{N-2}B &\cdots &B
    \end{bmatrix}.
\end{align*}

The cost function can be rewritten by 
\begin{align*}
    J =& \sum_{i=n}^{N-1}\|x(i|k)\|_Q^2 + \|u(i|k)\|^2_R + \|x(N|k)\|_P^2\\
    =& X(k)^T\mathcal{Q}X(k) + U(k)^T\mathcal{R}U(k)\\
    =& x(0|k)^TF^T\mathcal{Q}Fx(0|k) + 2x(0|k)^TF^T\mathcal{Q}\Phi U(k)\\
     &+ U(k)^T(\Phi^T\mathcal{Q}\Phi+\mathcal{R})U(k),
\end{align*}
where
\begin{equation*}
    \mathcal{Q} = \mathrm{diag}[\underbrace{0,\cdots,0}_{n-1},Q,\cdots,Q,P]
    %=\begin{bmatrix}
    %    0 & & & & & &0\\  &\ddots & & & & &\\
    %     & &0 & & & &\\  & & &Q & & &\\
    %     & & & &\ddots & &\\  & & & & &Q &\\
    %     0 & & & & &  &P
    %\end{bmatrix}
\end{equation*}
and
\begin{equation*}
    \mathcal{R} = \mathrm{diag}[\underbrace{0,\cdots,0}_{n-1},R,\cdots,R],
    %=\begin{bmatrix}
     %    0 & & & &  &0\\  &\ddots & &  & &\\
      %   & &0 & &  &\\  & & &R  & &\\
      %   & & & &\ddots  &\\  & & & & &R\\
    %\end{bmatrix}
\end{equation*}
implying that the cost function is quadratic with respect to $U(k)$.

It then follows that the constrained optimization \eqref{deadbeat opt finite} can then be solved by using MATLAB function ``quadprog''. 

%%%%%%%%%%%%%%%%%%%%%%%%%%%%%%%%%%%%%%%%%%%%%%%%%%%%%%%%%%%%%%%%%%%%%%%%%%%%%%%%

\balance

%\begin{thebibliography}{99}

%\bibliographystyle{apacite}
\bibliographystyle{IEEEtran}
\bibliography{deadbeat}

%\end{thebibliography}

\end{document}